\newcommand{\bcs}{{\rm bcs}}
\newcommand{\wbcs}{{\rm wbcs}}
\newcommand{\tw}{{\rm tw}}
\newcommand{\opt}{{\rm opt}}
\newcommand{\Left}{{\rm left}}
\newcommand{\Pref}{{\rm pref}}
\newtheorem{lemma}{Lemma}
\newtheorem{theorem}{Theorem}
\begin{document}
\title{Algorithms and Hardness Results for the Maximum Balanced Connected Subgraph Problem}
%
%
\author{Yasuaki~Kobayashi, Kensuke~Kojima, Norihide~Matsubara, Taiga~Sone, Akihiro~Yamamoto}

\maketitle              
\begin{abstract}
The Balanced Connected Subgraph problem (BCS) was recently introduced by Bhore et al. (CALDAM 2019).
In this problem, we are given a graph $G$ whose vertices are colored by red or blue.
The goal is to find a maximum connected subgraph of $G$ having the same number of blue vertices and red vertices.
They showed that this problem is NP-hard even on planar graphs, bipartite graphs, and chordal graphs.
They also gave some positive results: BCS can be solved in $O(n^3)$ time for trees and $O(n + m)$ time for split graphs and properly colored bipartite graphs, where $n$ is the number of vertices and $m$ is the number of edges.

In this paper, we show that BCS can be solved in $O(n^2)$ time for trees and $O(n^3)$ time for interval graphs.
The former result can be extended to bounded treewidth graphs.
We also consider a weighted version of BCS (WBCS). We prove that this variant is weakly NP-hard even on star graphs and strongly NP-hard even on split graphs and properly colored bipartite graphs, whereas the unweighted counterpart is tractable on those graph classes.
Finally, we consider an exact exponential-time algorithm for general graphs.
We show that BCS can be solved in $2^{n/2}n^{O(1)}$ time. This algorithm is based on a variant of Dreyfus-Wagner algorithm for the Steiner tree problem.

\end{abstract}
\section{Introduction}\label{sec:intro}

{\em Fairness} is one of the most important concepts in recent machine learning studies and numerous researches concerning ``fair solutions'' have been done so far such as fair bandit problem~\cite{fbandit}, fair clustering~\cite{fcluster}, fair ranking~\cite{franking}, and fair regression \cite{fregression}.
This brings us to a new question: Is it easy to find ``fair solutions'' in classical combinatorial optimization problems?
Chierichetti et al.~\cite{fairmatching} recently addressed a fair version of matroid constrained optimization problems and discussed polynomial time solvability, approximability, and hardness results for those problems.

In this paper, we study the problem of finding a ``fair subgraph''.
Here, our goal is to find a maximum cardinality connected ``fair subgraph'' of a given bicolored graph.
To be precise, we are given a graph $G = (B \cup R, E)$ in which the vertices in $B$ are colored by blue and those in $R$ are colored by red.
We say that a subgraph is {\em balanced} if it contains an equal number of blue and red vertices.
The objective of the problem is to find a balanced connected subgraph with the maximum number of vertices.
This problem is called the balanced connected subgraph problem (BCS), recently introduced by Bhore et al. \cite{bcs}.
Although finding a maximum size connected subgraph is trivially solvable in linear time, 
they proved that BCS is NP-hard even on bipartite graphs, on chordal graphs, and on planar graphs.
They also gave some positive results on some graph classes: BCS is solvable in polynomial time on trees, on split graphs, and on properly colored bipartite graphs.
In particular, they gave an $O(n^3)$ time algorithm for trees, where $n$ is the number of vertices of the input tree.

BCS can be seen as a special case of the graph motif problem in the following sense.
We are given a vertex (multi)colored graph $G = (V, E, c)$ with coloring function $c: V \rightarrow \{1, 2, \ldots, q\}$ and a multiset $M$ of colors $\{1, 2, \ldots, q\}$.
The objective of the graph motif problem is to find a connected subgraph $H$ of $G$ that agrees with $M$: the multiset $c(H) = \{c(v) : v \in V(H)\}$ coincides with $M$.
If $M$ is given as a set of $k/2$ red colors and $k/2$ blue colors, a feasible solution of the graph motif problem is a balanced connected subgraph of $k$ vertices.
Bj\"orklund~et~al.~\cite{motif} proved that there is an $O^*(2^{|M|})$ time randomized algorithm for the graph motif problem, where the $O^*$ notation suppresses the polynomial factor in $n$.
This allows us to find a balanced connected subgraph of $k$ vertices in time $O^*(2^{k})$ and hence BCS can be solved in $\max\{O^*(2^{0.773n}), O^*(2^{H(1-0.773)n})\} \subseteq O(1.709^n)$ time by using this $O^*(2^{k})$ time algorithm for $k \le 0.773n$ or by guessing the complement of an optimal solution for otherwise, where $H(x) = -x\log_2x - (1-x)\log_2(1-x)$ is the binary entropy function.

In this paper, we improve the previous running time $O(n^3)$ to $O(n^2)$ for trees and also give a polynomial time algorithm for interval graphs, which is in sharp contrast with the hardness result for chordal graphs.
The algorithm for trees can be extended to bounded treewidth graphs.
These results are given in Section~\ref{sec:tree} and \ref{sec:interval}.
For general graphs, we show in Section~\ref{sec:general} that BCS can be solved in $O^*(2^{n/2}) = O(1.415^n)$ time.
The idea of this exponential-time algorithm is to exploit the Dreyfus-Wagner algorithm \cite{DW} for the Steiner tree problem.
Let $R$ be the set of red vertices of $G$. Then, for each $S \subseteq R$ and for each $v$ in $G$, we first compute a tree $T$ that contains all the vertices $S \cup \{v\}$ but excludes all the vertices $R \setminus (S \cup \{v\})$.
This can be done in $O^*(2^{|R|})$ time by the Dreyfus-Wagner algorithm and its improvement due to Bj\"orklund et al.~\cite{subsetconv}.
Once we have such a tree for each $S$ and $v$, we can in linear time compute a balanced connected subgraph $H$ with $V(H) \cap R = S$.
We also consider a weighted counterpart of BCS, namely WBCS: the input is vertex-weighted bicolored graph and the goal is to find a maximum weight connected subgraph $H$ in which the total weights of red vertices and of blue vertices are equal.
If every vertex has a unit weight, the problem exactly corresponds to the normal BCS, and hence the hardness results for BCS on bipartite, chordal, and planar graphs also hold for WBCS. 
In contrast to the unweighted case, WBCS is particularly hard.
In Section~\ref{sec:hardness}, we show that WBCS is (weakly) NP-hard even on properly colored star graphs and strongly NP-hard even on split and properly colored bipartite graphs.
The hardness result for stars is best possible in the sense that WBCS on trees can be solved in pseudo-polynomial time.

\section{Preliminaries}
Throughout the paper, all the graphs are simple and undirected.
Let $G$ be a graph. We denote by $V(G)$ and $E(G)$ the set of vertices and edges in $G$, respectively.
We use $n$ to denote the number of vertices of the input graph. 
We say that a vertex set $U \subseteq V(G)$ is {\em connected} if its induced subgraph $G[U]$ is connected.
A graph is {\em bicolored} if every vertex is colored by blue or red.
Note that this coloring is not necessarily proper, that is, there may be adjacent vertices having the same color.
We denote by $B$ (resp. $R$) the set of blue (resp. red) vertices of the input graph.
The problems we consider are as follows.


\bigskip
\begin{itembox}[l]{\bf{The Balanced Connected Subgraph Problem (BCS)}}
\begin{description}
    \item[Input:] A bicolored graph $G = (B \cup R, E)$.
    \item[Output:] A maximum size connected induced subgraph $H$ of $G$ such that $|V(H) \cap B| = |V(H) \cap R|$.
\end{description}
\end{itembox}
\bigskip
\begin{itembox}[l]{\bf{The Weighted Balanced Connected Subgraph Problem (WBCS)}}
\begin{description}
    \item[Input:] A bicolored vertex weighted graph $G = (B \cup R, E, w)$, $w : B \cup R \rightarrow \mathbb{N}$.
    \item[Output:] A maximum weight connected induced subgraph $H$ of $G$ 
    such that 
    \[
    \sum_{v \in V(H) \cap B} w(v) = \sum_{v \in V(H) \cap R} w(v).
    \]
\end{description}
\end{itembox}
Here, the size and the weight of a subgraph are measured by the number of vertices and the sum of the weight of vertices in the subgraph, respectively. 

\section{Trees and bounded treewidth graphs}\label{sec:tree}

This section is devoted to showing that BCS can be solved in $O(n^2)$ time for trees, which improves upon the previous running time $O(n^3)$ of \cite{bcs}.
We also give an algorithm for BCS on bounded treewidth graphs whose running time is $O(n^2)$ as well.

The essential idea behind our algorithm is the same as one in \cite{bcs}.
Let $T$ be a bicolored rooted tree.
For each $v \in V(T)$, we denote by $T_v$ the subtree of $T$ rooted at $v$.
For the sake of simplicity, we convert the input tree $T$ into a rooted binary tree by adding uncolored vertices as follows.
For each $v \in V(T)$ having $p > 2$ children $v_1, v_2, \ldots, v_p$, we introduce a path of $p - 2$ vertices $u_1, u_2, \ldots, u_{p-2}$ that are all uncolored and make $T$ a rooted binary tree $T'$ as in Fig.~\ref{fig:binarization}.

\begin{figure}
    \centering
    \includegraphics[width=0.55\textwidth]{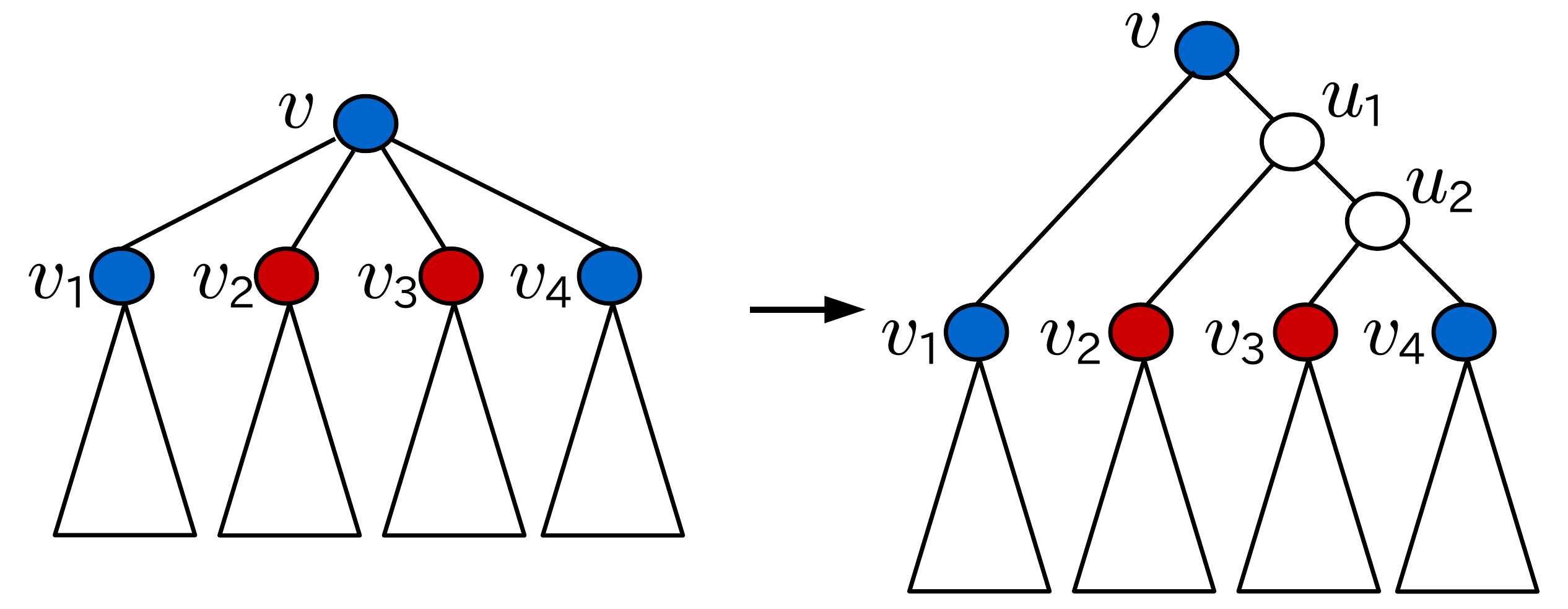}
    \caption{Binarizing a degree-$p$ vertex with $p > 2$.}
    \label{fig:binarization}
\end{figure}
We also assume that each internal node has exactly two children by appropriately adding uncolored children.
This conversion can be done in $O(n)$ time.
It is not hard to see that $T$ has a balanced connected subtree of size $2k$ whose root is $v \in V(T)$ if and only if $T'$ has a connected subgraph with $k$ blue vertices and $k$ red vertices that contains $v$ as the root.
Moreover, $T'$ has $O(n)$ vertices.
Thus, in the following, we seek a connected subgraph with $k$ blue vertices and $k$ red vertices, where $k$ is as large as possible. 
For each $v \in V(T'_v)$ and each integer $-|V(T'_v)| \le d \le |V(T'_v)|$, we say that $S \subseteq V(T')$ is {\em feasible for $(v, d)$} if it satisfies
\begin{itemize}
    \item if $S$ is not empty, $S$ must contain $v$ and
    \item $|S \cap B| - |S \cap R| = d$.
\end{itemize}
We denote by $\bcs(v, d)$ the maximum size of $S$ that is feasible for $(v, d)$, where the size is measured by the number of colored vertices only. 
Let us note that for any $v \in V(T')$, the empty set is feasible for $(v, d)$ when $d = 0$. 
Given this, our goal is to compute $\bcs(v, d)$ for all $v$ and $d$.

Let $f: V(T') \rightarrow \{-1, 0, 1\}$ be a function, where $f(v) = 1$ if $v$ is a blue vertex, $f(v) = -1$ if $v$ is a red vertex, and $f(v) = 0$ if $v$ is an uncolored vertex.

Suppose $v$ is a leaf of $T'$.
Then, $\bcs(v, f(v)) = |f(v)|$, $\bcs(v, 0) = 0$, and $\bcs(v, d) = -\infty$ for $d \notin \{0, f(v)\}$.
Suppose otherwise that $v$ is an internal node.
Let $v_l$ and $v_r$ be the children of $v$.
Observe that every feasible solution for $(v, d)$ can be split by $v$ into two feasible solutions for $(v_l, d_l)$ and $(v_r, d_r)$ for some $d_l$ and $d_r$.
Conversely, for every pair of feasible solutions for $(v_l, d_l)$ and $(v_r, d_r)$, we can construct a feasible solution for $(v, d_l + d_r + f(v))$.
Thus, we have the following straightforward lemma.

\begin{lemma}
    Let $d$ be an integer with $-|V(T'_v)| \le d \le |V(T'_v)|$. Then,
    \[
        \bcs(v, d) = \max_{\substack{d_l, d_r\\d_l + d_r + f(v) = d}} \left(\bcs(v_l, d_l) + \bcs(v_r, d_r)\right) + |f(v)|,
    \]
    where the maximum is taken among all pairs $d_l, d_r$ with $d_l + d_r + f(v) = d$.
\end{lemma}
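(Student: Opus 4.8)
The plan is to prove the two inequalities ``$\ge$'' and ``$\le$'' separately, making precise the ``split'' and ``reconstruct'' observations preceding the statement. Throughout I use the convention $\bcs(u,d') = -\infty$ whenever no set is feasible for $(u,d')$ (in particular whenever $d' \notin [-|V(T'_u)|, |V(T'_u)|]$), so that the maximum on the right-hand side may be taken over all integer pairs $(d_l,d_r)$ with $d_l+d_r+f(v)=d$: a pair with an infeasible argument simply contributes $-\infty$ and is harmless. Note also that if $d_l,d_r$ are in the respective ranges then $d_l+d_r+f(v)$ automatically lies in $[-|V(T'_v)|,|V(T'_v)|]$, since $|V(T'_v)| = |V(T'_{v_l})| + |V(T'_{v_r})| + 1$.

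For the inequality ``$\bcs(v,d) \ge \cdots$'', fix a pair $(d_l,d_r)$ with $d_l+d_r+f(v)=d$ for which both terms are finite, and let $S_l$, $S_r$ be optimal sets feasible for $(v_l,d_l)$, $(v_r,d_r)$ respectively. Put $S := S_l \cup S_r \cup \{v\}$. As $S_l \subseteq V(T'_{v_l})$ and $S_r \subseteq V(T'_{v_r})$ are disjoint and omit $v$, and as a nonempty $S_l$ is connected and contains $v_l$ (and likewise for $S_r$ and $v_r$), and $v$ is adjacent to both $v_l$ and $v_r$ in $T'$, the set $S$ is connected, contains $v$, and is contained in $V(T'_v)$; its color imbalance equals $(|S_l \cap B| - |S_l \cap R|) + (|S_r \cap B| - |S_r \cap R|) + f(v) = d_l + d_r + f(v) = d$. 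Hence $S$ is feasible for $(v,d)$, and counting colored vertices, $|S \cap (B \cup R)| = \bcs(v_l,d_l) + \bcs(v_r,d_r) + |f(v)|$. Taking the maximum over $(d_l,d_r)$ gives the desired bound.

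For the reverse inequality, let $S$ achieve $\bcs(v,d)$, assuming $\bcs(v,d) > -\infty$. If $S \ne \emptyset$ then $v \in S$, and $S_l := S \cap V(T'_{v_l})$ and $S_r := S \cap V(T'_{v_r})$ partition $S \setminus \{v\}$. The step that uses the tree structure is the verification that a nonempty $S_l$ is itself connected and contains $v_l$: because $T'$ is a tree and $G[S]$ is connected, the unique $T'$-path from any vertex of $S_l$ to $v$ lies in $S$ and runs through $v_l$, and the unique $T'$-path between two vertices of $S_l$ stays within $S_l$. Thus $S_l$ is feasible for $(v_l,d_l)$ and $S_r$ for $(v_r,d_r)$, where $d_l$, $d_r$ are their color imbalances; these satisfy $d_l+d_r+f(v)=d$, and $|S \cap (B \cup R)| = |S_l \cap (B \cup R)| + |S_r \cap (B \cup R)| + |f(v)| \le \bcs(v_l,d_l) + \bcs(v_r,d_r) + |f(v)|$, which is at most the right-hand side.

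The remaining possibility $S = \emptyset$ forces $d = 0$ and $\bcs(v,0) = 0$; since the recurrence above accounts for every \emph{nonempty} feasible set, one understands the maximum at $d = 0$ as also including this trivial value $0$ (and when $f(v) = 0$ the right-hand side already reaches $0$ via $d_l = d_r = 0$ together with $\bcs(v_l,0), \bcs(v_r,0) \ge 0$). I expect the split step above to be the only real obstacle: one has to argue carefully, using that $T'$ is a tree, that restricting a connected root-containing solution to a child's subtree yields a set that is again connected and root-containing; the rest is routine bookkeeping of color imbalances and colored-vertex counts.
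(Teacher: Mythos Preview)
Your proposal is correct and follows exactly the ``split/reconstruct'' argument the paper sketches immediately before the lemma; the paper in fact offers no further proof beyond calling the lemma ``straightforward,'' so your two-inequality write-up is simply a careful elaboration of the same idea. Your explicit treatment of the boundary case $S=\emptyset$ at $d=0$ (and the observation that the recurrence, read literally, only ranges over nonempty solutions) is a detail the paper glosses over.
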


The running time of evaluating this recurrence may be estimated to be $O(n^3)$ in total to compute $\bcs(v, d)$ for all $v \in V(T')$ and $-|V(T'_v)| \le d \le |V(T'_v)|$ since there are $O(n^2)$ subproblems and solving each subproblem may take $O(n)$.
However, for a node $v$ with two children $v_l$ and $v_r$ the evaluation can be done in $O(|V(T'_{v_l})||V(T'_{v_r})|)$ time in total for all $d$ by simply joining all the pairs $\bcs(v_l, d_l)$ and $\bcs(v_r, d_r)$.
Therefore, the total running time is
\[
    \sum_{v \in V(T')} O(|V(T'_{v_l})||V(T'_{v_r})|) = O(n^2).
\]
This upper bound follows from the fact that the left-hand side can be seen as counting the number of edges in the complete graph on $V(T'_v)$.

\begin{theorem}
    BCS on trees can be solved in $O(n^2)$ time.
\end{theorem}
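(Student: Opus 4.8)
The plan is to turn the recurrence of the Lemma into a single bottom-up dynamic program on $T'$ and then to bound its running time by a double-counting argument.

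First I would record the reduction to $T'$. Binarizing $T$ as described takes $O(n)$ time, introduces only $O(n)$ new (uncolored) vertices, and preserves the colored vertices together with all their connectivity relations; consequently a maximum balanced connected subgraph of $T$ corresponds to a connected subgraph of $T'$ that maximizes the number of colored vertices subject to having equally many blue and red ones. Since every connected subtree of $T'$ has a unique highest vertex $v$ (the one closest to the root), and a set containing $v$ is balanced precisely when it is feasible for $(v,0)$, the optimum value equals $\max_{v \in V(T')} \bcs(v,0)$. So it suffices to compute all the values $\bcs(v,0)$, and by the Lemma this forces us to compute $\bcs(v,d)$ for every relevant index $d$; at a vertex $v$ only indices with $|d| \le |V(T'_v)|$ are attainable, so the table stored at $v$ has $O(|V(T'_v)|)$ entries.

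Next I would describe the DP. Process $V(T')$ in post-order. For a leaf $v$, fill in the three base-case values $\bcs(v,f(v)) = |f(v)|$, $\bcs(v,0)=0$, and $\bcs(v,d) = -\infty$ otherwise, in $O(1)$ time. For an internal vertex $v$ with children $v_l, v_r$, a set $S$ feasible for $(v,d)$ restricts to sets feasible for $(v_l,d_l)$ and $(v_r,d_r)$ with $d_l + d_r + f(v) = d$, and conversely; hence the value $\bcs(v,d)$ is obtained by the max-plus convolution of the two child tables as in the Lemma (taking, in addition, $\bcs(v,0) \ge 0$ to keep the always-available empty solution — a routine point). I would implement this convolution by iterating over all pairs $(d_l,d_r)$ with both child values finite, which costs $O(|V(T'_{v_l})| \cdot |V(T'_{v_r})|)$ time at $v$; correctness is exactly the statement of the Lemma.

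The step that needs the most care — and the one I would call the main obstacle — is bounding $\sum_{v \text{ internal}} O(|V(T'_{v_l})| \cdot |V(T'_{v_r})|)$ by $O(n^2)$. The clean way is to charge, at each internal $v$, the quantity $|V(T'_{v_l})| \cdot |V(T'_{v_r})|$ to the set of ordered pairs $(x,y)$ with $x \in V(T'_{v_l})$, $y \in V(T'_{v_r})$. Because $T'$ is a tree, the subtrees $V(T'_{v_l})$ and $V(T'_{v_r})$ are disjoint and $v$ is the unique vertex at which the root-to-$x$ and root-to-$y$ paths separate; hence each ordered pair of vertices of $T'$ is charged at most once over all $v$. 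Therefore the total cost is at most $O(|V(T')|^2) = O(n^2)$ since $|V(T')| = O(n)$. Adding the $O(n)$ binarization and the $O(n)$ final scan over $\bcs(v,0)$ gives an overall running time of $O(n^2)$, which is the claimed bound. The only things to verify in full detail are the disjointness/partition property of the child subtrees (immediate since $T'$ is a tree) and that the naive max-plus convolution of arrays of lengths $a$ and $b$ really runs in $O(ab)$, not more — both are straightforward.
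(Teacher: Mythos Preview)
Your proposal is correct and follows essentially the same approach as the paper: binarize, run the max-plus convolution from the Lemma at each internal node in time proportional to the product of the two child-subtree sizes, and bound the total by $O(n^2)$ via the charging argument that each pair of vertices is counted only at their lowest common ancestor. The paper phrases this last step as ``counting the number of edges in the complete graph on $V(T')$'', which is exactly your ordered-pair charging argument up to a constant factor.
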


This algorithm can be extended for bounded treewidth graphs.
{\em Treewidth} is a well-known graph invariant, measuring ``tree-likeness'' of graphs.
A {\em tree decomposition} of a graph $G = (V, E)$ is a rooted tree $T$ that satisfies the following properties:
(1) for each $v \in V(T)$, some vertex set $X_v$, called a {\em bag}, is assigned and $\bigcup_{v \in V(T)}X_v = V$;
(2) for each $e \in E$, there is $v \in V(T)$ such that $e \subseteq X_v$;
(3) for each $w \in V$, the set of nodes $v \in V(T)$ containing $w$ (i.e. $w \in X_v$) induces a subtree of $T$.
The {\em width} of $T$ is the maximum size of its bag minus one.
The {\em treewidth} of $G$, denoted by $\tw(G)$, is the minimum integer $k$ such that $G$ has a tree decomposition of width $k$.

The algorithm is quite similar to dynamic programming algorithms based on tree decompositions for connectivity problems \cite{param}, such as the Steiner tree problem and the Hamiltonian cycle problem.
It is worth noting that the property of being balanced may not be able to be expressed by a formula in the Monadic Second Order Logic (MSO) with bounded length.
Thus, we cannot directly apply the famous Courcelle's theorem \cite{mso1,mso2} to our problem.

Here, we only sketch an overview of the algorithm and the proof is almost the same as those for connectivity problems.
Let $T$ be a tree decomposition of $G$ whose width is $O(\tw(G))$.
Such a decomposition can be obtained in $2^{O(\tw(G))}n$ time by the algorithm of Bodlaender et al. \cite{tddp}.
We can assume that $T$ is rooted.
For each bag $X$ of $T$, we denote by $T_{X}$ the subtree rooted at $X$ and by $V_X$ the set of vertices appeared in some bag of $T_{X}$. 
For each bag $X$ of $T$, integer $d$ with $-|V_X| \le d \le |V_X|$, $S \subseteq X$, and a partition $\mathcal S$ of $S$,
we compute the maximum size $\bcs(X, d, S, \mathcal S)$ of a set of vertices $U \subseteq V_X$ such that $|U \cap B| - |U \cap R| = d$, $X \cap U = S$, and $u, v \in S$ is connected in $U$ if and only if $u$ and $v$ belong to the same block in the partition $\mathcal S$.
We can compute $\bcs(X, d, S, \mathcal S)$ guided by the tree decomposition in $2^{O(\tw(G)\log \tw(G))}n^2$ time.
To improve the running time, we can apply the rank-based approach of Bodlaender et al.~\cite{tddp} to this dynamic programming in the same way as the Steiner tree problem.
The running time is still quadratic in $n$ but the exponential part can be improved to $2^{O(\tw(G))}$.

\begin{theorem}
    BCS can be solved in $2^{O(\tw(G))}n^2$ time.
\end{theorem}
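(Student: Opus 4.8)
The plan is to lift the tree dynamic program to a tree decomposition of $G$, exactly mirroring the standard treatment of the Steiner tree problem on bounded treewidth graphs, and then to apply the rank-based method to remove the superpolynomial dependence on the number of connectivity patterns. First I would invoke the algorithm of Bodlaender et al.~\cite{tddp} to compute, in $2^{O(\tw(G))}n$ time, a tree decomposition $T$ of $G$ of width $O(\tw(G))$, and then convert it into a \emph{nice} tree decomposition with the usual node types (leaf, introduce vertex, forget vertex, join), which has $O(\tw(G)\cdot n)$ nodes; this conversion costs $2^{O(\tw(G))}n$. For each bag $X$, integer $d$ with $-|V_X|\le d\le |V_X|$, subset $S\subseteq X$, and partition $\mathcal S$ of $S$, the table entry $\bcs(X,d,S,\mathcal S)$ is as defined in the excerpt: the maximum number of (colored) vertices in a set $U\subseteq V_X$ with $U\cap X=S$, with $|U\cap B|-|U\cap R|=d$, and whose connected components, restricted to $S$, induce exactly the partition $\mathcal S$ of $S$.

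Next I would write out the recurrences at each node type. At a leaf bag (empty or a single vertex) the table is read off directly, accounting for the color via the function $f$. At an introduce-vertex node adding $w$ to $X$, either $w\notin U$ (copy the child entry) or $w\in U$, in which case $w$ forms a new singleton block that may then be merged with the blocks of its already-introduced neighbors in $X$, and the balance parameter shifts by $f(w)$. At a forget-vertex node dropping $w$, we take the maximum over child entries where $w$ is either absent from $S$ or present; if present, $w$ must not be an isolated block unless $S\setminus\{w\}$ is empty, since any forgotten vertex in $V_X$ lies in a component that will never again touch the bag and hence must already be connected to the rest of $S$ --- this is the point at which the partition bookkeeping matters. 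At a join node with children $X_1,X_2$ carrying the same bag $X$, we combine $U_1$ and $U_2$ by, for each target partition $\mathcal S$ and each split $d=d_1+d_2+(\text{correction for double-counted vertices of }S)$, taking the best pair $(\mathcal S_1,\mathcal S_2)$ whose join (as partitions) refines to $\mathcal S$; here one must be careful that the vertices of $S$ are counted once, not twice, in both the size and the balance $d$. Correctness follows from the standard exchange argument: any feasible balanced connected set $U$ for $G=V_{\text{root}}$ appears as a table entry at the root, and conversely every table entry is realized by some such set; maximizing over root entries with $d=0$ and nonempty $S$ (or handling the empty/single-vertex solutions separately) yields the optimum.

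The running time of this naive version is $2^{O(\tw(G)\log\tw(G))}n^2$: there are $2^{O(\tw(G)\log\tw(G))}n\cdot O(n)$ table entries (the $O(n)$ from $d$, the $\tw(G)^{O(\tw(G))}$ from partitions), and the join step naively ranges over pairs of partitions. To shave the $\log\tw(G)$ from the exponent I would apply the rank-based approach of Bodlaender et al.~\cite{tddp}: represent the collection of partitions attached to each $(X,d,S)$ by a matrix over $\mathrm{GF}(2)$ encoding connectivity (the ``cut/matching'' representation), keep only a basis of rank $2^{O(|S|)}$ after each operation, and observe that the introduce, forget, and join operations on these representative sets preserve the property that the optimum over all consistent completions is unchanged. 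Since the balance parameter $d$ is a passive additive index that is never touched by the rank reduction, the representative-set machinery applies verbatim, parameter by parameter. This gives $2^{O(\tw(G))}n^2$ overall, where the extra factor of $n$ (rather than the linear bound one gets for decision versions of Steiner tree) comes from the range of $d$ and the need to carry a size value for each $d$.

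The main obstacle I expect is not conceptual but bookkeeping: getting the balance parameter $d$ and the vertex-count objective to interact correctly with the partition operations at join and forget nodes --- in particular ensuring that shared bag vertices are neither double-counted nor double-discounted in $d$ when two subtrees are merged --- and verifying that the rank-based reduction, which was designed for a single connectivity constraint, remains valid when the tables are stratified by the extra numeric index $d$. The latter is essentially automatic, since $d$ ranges over a polynomial-size set and the reduction operates independently within each value of $d$, but it is the step that must be stated carefully. Since all of this is by now routine for connectivity problems, I would present only the overview above and refer to \cite{param,tddp} for the mechanical details, which is why the theorem statement is immediately followed by the sketch rather than a full proof.
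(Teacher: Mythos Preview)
Your proposal is correct and follows essentially the same approach as the paper, which itself only gives a sketch: both compute a width-$O(\tw(G))$ decomposition via \cite{tddp}, run the standard connectivity DP over states $(X,d,S,\mathcal S)$ for a naive $2^{O(\tw(G)\log\tw(G))}n^2$ bound, and then invoke the rank-based method of \cite{tddp} to shave the $\log\tw(G)$ from the exponent. If anything, you spell out more of the node-by-node recurrences and the interaction of the balance index $d$ with the join step than the paper does.
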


We can extend the algorithm for trees to the weighted case, namely WBCS.
For a tree $T$, instead of computing $\bcs(v, d)$, we compute $\wbcs(v, d)$; For $v \in V(T)$ and for $d$ with $-\sum_{u \in V(T_v)}w(u) \le d \le \sum_{u \in V(T_v)}w(u)$, $\wbcs(v, d)$ is the maximum total weight of a subtree $T_v'$ of $T_v$ that contains $v$ and satisfies $\sum_{u \in V(T_v')\cap B}w(u) - \sum_{w \in V(T_v')\cap R}w(u) = d$.
The algorithm itself is almost the same with the previous one, but the running time analysis is slightly different.
Let $W$ be the total weight of the vertices of $T$.
The straightforward evaluation is that for each $v \in V(T)$, the values $\wbcs(v, \ast)$ are computed by a dynamic programming algorithm, which runs in $O(p_vW^2)$ time, where $p_v$ is the number of children of $v$.
Therefore, the overall running time is upper bounded by $O(nW^2)$.
To improve the quadratic dependency of $W$, we can exploit the heavy-light recursive dynamic programming technique~\cite{treedp}.
They proved that, given a tree whose vertex contains an item and each item has a weight and a value,
the problem, called {\em tree constrained knapsack problems}, of maximizing the total value of items that induces a subtree subject to the condition that the total weight is upper bounded by a given budget can be solved in $O(n^{\log3}W) = O(n^{1.585}W)$ time, where $W$ is the total weight of items.
WBCS can be seen as this tree constrained knapsack problem and then almost the same algorithm works as well.
Therefore, WBCS can be solved in $O(n^{1.585}W)$ time.

\begin{theorem}
    WBCS on trees can be solved in $\min\{O(nW^2), O(n^{1.585}W)\}$ time, where $W$ is the total weight of the vertices.
\end{theorem}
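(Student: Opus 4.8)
The plan is to mimic the unweighted dynamic program from the start of this section, replacing the color balance $d \in \{-n,\dots,n\}$ by the signed weight balance $d \in \{-W,\dots,W\}$. First I would binarize the input tree exactly as before, adding uncolored (and, for WBCS, zero-weight) vertices so that every internal node has exactly two children; this costs $O(n)$ time and multiplies neither $n$ nor $W$ by more than a constant. Then for each node $v$ with children $v_l, v_r$ and each attainable $d$ I would use the recurrence
\[
    \wbcs(v, d) = \max_{d_l + d_r + f(v)w(v) = d} \bigl(\wbcs(v_l, d_l) + \wbcs(v_r, d_r)\bigr) + w(v),
\]
where $f(v) \in \{-1,0,1\}$ records the color sign as in the unweighted case and $w(v)$ is the weight; the base case at a leaf $v$ sets $\wbcs(v, f(v)w(v)) = w(v)$, $\wbcs(v,0)=0$, and $-\infty$ otherwise. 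Correctness follows verbatim from the split/merge argument already given for $\bcs$: every feasible subtree rooted at $v$ decomposes uniquely across the two child subtrees, and conversely any pair of child-feasible sets glues to a $v$-feasible set. Having all $\wbcs(v,\cdot)$ tables, the optimum of WBCS is $\max_v \wbcs(v,0)$ over all $v$, recovering a subtree of the original tree by undoing the binarization.

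For the running time I would give two analyses. For the $O(nW^2)$ bound: at node $v$ with $p_v$ children in the \emph{original} (unbinarized) tree, merging the children's tables pairwise takes $O(p_v W^2)$ time, since each of the $O(W)$ attainable balances of a partial prefix is combined with each of the $O(W)$ balances of the next child; summing $\sum_v p_v = O(n)$ gives $O(nW^2)$. For the sharper $O(n^{1.585}W)$ bound I would invoke the heavy-light recursive dynamic programming of \cite{treedp} for tree-constrained knapsack. The reduction is: make each vertex an item whose ``weight'' in the knapsack sense is its signed contribution $f(v)w(v)$ (so we are tracking a balance rather than a one-sided budget) and whose ``value'' is $w(v)$; a feasible solution is a subtree-inducing set of items whose signed weights sum to $0$, and we maximize total value. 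Their heavy-light scheme computes, for every node, the full value-vs-budget profile in $O(n^{\log 3} W)$ time, and the same bookkeeping produces our balance-indexed tables. Taking the better of the two gives the claimed $\min\{O(nW^2), O(n^{1.585}W)\}$.

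The main obstacle I anticipate is the adaptation of the tree-constrained knapsack result of \cite{treedp}: as stated there the constraint is a one-sided upper bound $\sum \text{weight} \le \text{budget}$, whereas WBCS needs a two-sided \emph{equality} on a \emph{signed} quantity. I would handle this by noting the signed balance always lies in $[-W, W]$, so the relevant profile still has $O(W)$ entries, and the heavy-light recursion — which only relies on the profile length and on profiles combining by a $(\min,+)$- or $(\max,+)$-style convolution over an interval of size $O(W)$ — carries over with the interval $[0,W]$ replaced by $[-W,W]$; the $O(n^{\log 3})$ factor comes purely from the recursion tree of the heavy-light decomposition and is unaffected. The remaining details (handling $-\infty$ entries, the zero-weight binarization vertices, and confirming that the convolution step respects the $O(W)$ length bound throughout the recursion) are routine.
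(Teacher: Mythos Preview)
Your proposal is correct and follows essentially the same route as the paper: both obtain the $O(nW^2)$ bound from the per-node $O(p_vW^2)$ merge cost summed over $\sum_v p_v = O(n)$, and the $O(n^{1.585}W)$ bound by casting WBCS as a tree-constrained knapsack and invoking the heavy-light recursive dynamic programming of \cite{treedp}. You are in fact more explicit than the paper about adapting the one-sided budget constraint to a signed balance in $[-W,W]$, which the paper handles only with the remark that ``almost the same algorithm works as well.''
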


\section{Interval graphs}\label{sec:interval}

In this section, we show that BCS can be solved in $O(n^3)$ time on interval graphs.
Very recently, another polynomial time algorithm for interval graphs has been developed by \cite{bcs2}.

A graph $G = (V, E)$ is {\em interval} if it has an interval representation: an {\em interval representation} of $G$ is a set of intervals that corresponds to its vertex set $V$, such that two vertices $u, v \in V$ are adjacent to each other in $G$ if and only if the corresponding intervals have a non-empty intersection.
We denote by $I_v$ the interval corresponding to vertex $v$ and by $l_v$ and $r_v$ the left and right end points of $I_v$, respectively.
Hence, in what follows, we do not distinguish between vertices and intervals and interchangeably use them. 
Given an interval graph, we can compute an interval representation in linear time \cite{interval}.
Moreover, we can assume that, in the interval representation,
every end point of intervals has a unique integer coordinate between $1$ and $2n$.

First, we sort the input intervals in ascending order of their left end points, that is, for any $I_{v_i} = [l_i, r_i]$ and $I_{v_j} = [l_j, r_j]$ with $i < j$, it holds that $l_i < l_j$.
The following lemma is crucial for our dynamic programming.

\begin{lemma}\label{lem:interval_elimination}
    Let $S$ be a non-empty subset of $V$ such that $G[S]$ is connected and let $v$ be the vertex in $S$ whose index is maximized.
    Then $G[S \setminus \{v\}]$ is connected.
\end{lemma}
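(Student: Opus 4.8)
The plan is to exploit the ordering of intervals by left endpoint. Let $v$ be the vertex in $S$ of maximum index, so $l_v$ is the largest left endpoint among vertices of $S$. I want to show that every vertex $u \in S \setminus \{v\}$ remains connected to the rest of $S \setminus \{v\}$ inside $G[S\setminus\{v\}]$. Since $G[S]$ is connected, it suffices to show that if a path in $G[S]$ between two vertices $u,w \in S\setminus\{v\}$ passes through $v$, then one can reroute it to avoid $v$; equivalently, it suffices to show that the neighbors of $v$ inside $S$ form a clique in $G[S\setminus\{v\}]$, since then any path $\ldots - a - v - b - \ldots$ can be shortcut to $\ldots - a - b - \ldots$. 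So the crux is the following claim: if $a, b \in S$ are both adjacent to $v$, then $a$ and $b$ are adjacent to each other.

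\medskip

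\noindent To prove this claim, recall that $a$ adjacent to $v$ means $I_a \cap I_v \neq \emptyset$, i.e. $l_v \le r_a$ (using $l_a \le l_v$, which holds because $v$ has the maximum index hence the largest left endpoint, and all left endpoints are distinct). Likewise $b$ adjacent to $v$ gives $l_v \le r_b$. Now consider $I_a$ and $I_b$: without loss of generality $l_a \le l_b$. Then $l_b \le l_v \le r_a$, so $l_b \le r_a$, and combined with $l_b \le r_b$ this shows $I_a \cap I_b \supseteq [l_b, \min(r_a,r_b)] \neq \emptyset$. Hence $a$ and $b$ are adjacent. This establishes that $N_{G[S]}(v)$ induces a clique.

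\medskip

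\noindent Given the claim, the connectivity of $G[S\setminus\{v\}]$ follows in a standard way: in a connected graph, deleting a vertex whose neighborhood is a clique cannot disconnect the graph, because any path through that vertex can be locally shortcut. Concretely, take any two vertices $x, y \in S \setminus \{v\}$; since $G[S]$ is connected there is an $x$–$y$ walk in $G[S]$; each time the walk uses $v$, it enters from some neighbor $a$ and leaves to some neighbor $b$ of $v$, and by the claim $a$ and $b$ are adjacent, so we replace the subwalk $a - v - b$ by the edge $a - b$; after finitely many such replacements we obtain an $x$–$y$ walk avoiding $v$, which lies in $G[S\setminus\{v\}]$. Therefore $G[S\setminus\{v\}]$ is connected.

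\medskip

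\noindent I do not anticipate a serious obstacle here; the only point requiring a little care is making sure the inequality $l_a \le l_v$ is used correctly — it relies on $v$ having the largest index, which by the sorting convention means the largest left endpoint, and on $a \ne v$ together with distinctness of endpoints. The geometric argument that two intervals both meeting a third interval that lies "to the right" of both must themselves overlap is the heart of the matter, and it is elementary.
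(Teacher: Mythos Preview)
Your proof is correct, and it takes a genuinely different route from the paper's argument. The paper argues by contradiction: if $G[S\setminus\{v\}]$ splits into two components, each component's intervals merge into a single ``super-interval'', the two super-intervals are disjoint, and then $I_v$ must meet both --- forcing $l_v$ to lie to the left of the rightmost super-interval's left end, contradicting that $v$ has the largest left endpoint in $S$. Your argument is instead direct and structural: you show that $v$ is \emph{simplicial} in $G[S]$ (its neighborhood is a clique), by the elementary observation that any two intervals whose left endpoints lie to the left of $l_v$ and whose right endpoints lie to the right of $l_v$ must both contain $l_v$ and hence overlap. Removing a simplicial vertex never disconnects a graph, which is exactly your shortcutting argument.

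Both proofs are short; yours has the advantage of isolating a reusable fact (the rightmost-left-endpoint vertex is simplicial, which is in fact the standard proof that interval graphs admit a perfect elimination ordering), while the paper's proof leans on the geometric picture of connected components as contiguous intervals. Neither is materially simpler than the other.
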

\begin{proof}
    Suppose for contradiction that $G[S \setminus \{v\}]$ has at least two connected components, say $C_a$ and $C_b$.
    An important observation is that an interval graph is connected if and only if the union of their intervals forms a single interval.
    Thus, $C_a$ and $C_b$ respectively induce intervals $\mathcal I_a$ and $\mathcal I_b$ that have no intersection with each other.
    Without loss of generality, we may assume that $\mathcal I_a$ is entirely to the left of $\mathcal I_b$, i.e., the right end of $\mathcal I_a$ is strictly to the left of the left end of $\mathcal I_b$.
    Since $G[S]$ is connected, $I_v$ must have an intersection with both $\mathcal I_a$ and $\mathcal I_b$.
    This contradicts the fact that $l_u < l_v$ for every $u \in S \setminus \{v\}$.
\end{proof}

For $0 \le i \le n$, $1 \le k \le 2n$, and $-n \le d \le n$, we say that a non-empty set $S~\subseteq~\{v_1, v_2, \ldots, v_i\}$ is {\em feasible for $(i, k, d)$} 
if $S$ induces a connected subgraph of $G$, $\max_{v\in S}r_{v} = k$, and $|S \cap B| - |S \cap R| = d$.
We denote by $\bcs(i, k, d)$ the maximum cardinality set that is feasible for $(i, k, d)$.
We also define as $\bcs(i, k, d) = -\infty$ if there is no feasible subset for $(i, k, d)$.
In particular, $\bcs(0, k, d) = -\infty$ for all $1 \le k \le 2n$ and $-n \le d \le n$.
Let $f: V \rightarrow \{1, -1\}$ such that $f(v) = 1$ if and only if $v \in B$.
The algorithm is based on the following recurrences.
\begin{lemma}\label{lem:interval}
    For $i > 0$, $\bcs(i, k, d) = $
    \begin{eqnarray*}
        \begin{cases}
            \displaystyle\max\{\bcs(i - 1, k, d - f(v_i)) + 1, \bcs(i - 1, k, d)\} & (k > r_i)\\
            \displaystyle \max_{l_i < k' < r_i}\bcs(i - 1, k', d - f(v_i)) + 1 & (k = r_i\ {\text and}\ d \neq f(v_i))\\
            \displaystyle \max\{1, \max_{l_i < k' < r_i}\bcs(i - 1, k', d - f(v_i)) + 1\} & (k = r_i\ {\text and}\ d = f(v_i))\\
            \bcs(i - 1, k, d) & (\text{otherwise}).
        \end{cases}
    \end{eqnarray*}
\end{lemma}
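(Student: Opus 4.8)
The plan is to verify the recurrence by case analysis on where the right end point of $v_i$ lies relative to the parameter $k$, using Lemma~\ref{lem:interval_elimination} as the key structural tool. Throughout, fix a feasible set $S$ for $(i,k,d)$; by definition $S \subseteq \{v_1,\dots,v_i\}$, $G[S]$ is connected, $\max_{v \in S} r_v = k$, and $|S \cap B| - |S \cap R| = d$. The central dichotomy is whether $v_i \in S$ or not. If $v_i \notin S$, then $S$ is feasible for $(i-1,k,d)$, and conversely any feasible set for $(i-1,k,d)$ is feasible for $(i,k,d)$; this accounts for the term $\bcs(i-1,k,d)$ wherever it appears. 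So the work is in analyzing the case $v_i \in S$.

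Suppose $v_i \in S$. Since $v_i$ has the maximum index in $\{v_1,\dots,v_i\} \supseteq S$, it certainly has the maximum index in $S$, so Lemma~\ref{lem:interval_elimination} tells us $S' := S \setminus \{v_i\}$ is connected (or empty). Write $d' = d - f(v_i)$ for the color balance of $S'$. There are two subcases. If $S' \neq \emptyset$, let $k' = \max_{v \in S'} r_v$; then $S'$ is feasible for $(i-1,k',d')$, and $|S| = |S'| + 1$. Now I must pin down the relationship between $k$, $k'$, and $r_i$. Since $k = \max_{v \in S} r_v$ and $S = S' \cup \{v_i\}$, we have $k = \max(k', r_i)$. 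Moreover, because $G[S]$ is connected and (by the sorting) $l_v < l_{v_i}$ for every $v \in S'$, the interval $I_{v_i}$ must overlap the interval spanned by $S'$; combined with $l_v < l_{v_i} < r_v$ forcing... — more carefully: connectivity of $G[S]$ forces some $v \in S'$ adjacent to $v_i$, and since $l_v < l_{v_i}$ we get $l_{v_i} \le r_v$, hence $r_v \ge l_{v_i}$, so $k' \ge l_{v_i}$, i.e. $k' > l_i$. Conversely $k' \le k$ always. This gives the constraint $l_i < k' \le \max(k,\text{stuff})$ that matches the ranges in the lemma: when $k > r_i$ we must have $k' = k$ (since $r_i < k = \max(k',r_i)$ forces $k' = k$), yielding the first line; when $k = r_i$ we have $k' \le r_i$, and one checks $k' < r_i$ strictly because two intervals sharing the exact integer coordinate $r_i = r_{v_i}$ as a right endpoint is forbidden by the uniqueness of coordinates, so $k'$ ranges over $l_i < k' < r_i$, giving the second line. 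If instead $S' = \emptyset$, then $S = \{v_i\}$, so $k = r_i$, $d = f(v_i)$, and $|S| = 1$; this is exactly the extra ``$1$'' in the third line. Finally, if $k < r_i$ then $v_i \notin S$ is forced (as $r_{v_i} = r_i > k$ would make $\max_{v\in S} r_v \ge r_i > k$), so only $\bcs(i-1,k,d)$ survives — the last line.

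The converse direction — that every value on the right-hand side is realized by some feasible set for $(i,k,d)$ — is the routine part: given a feasible set $S'$ for $(i-1,k',d')$ with $l_i < k' < r_i$ (or $k' = k > r_i$), the set $S' \cup \{v_i\}$ is connected because $I_{v_i}$ overlaps the span of $S'$ (as $l_i < k' = \max r_v$ and $l_i > l_v$ for all $v \in S'$ so $I_{v_i}$ meets the interval ending at $k'$), has color balance $d' + f(v_i) = d$, and has $\max$-right-endpoint equal to $\max(k',r_i) = k$ in each respective case; and $\{v_i\}$ alone witnesses the standalone ``$1$''. I expect the main obstacle to be the bookkeeping around the boundary values of $k'$: specifically arguing cleanly that $k = r_i$ forces $k' \neq r_i$ (needing the distinct-integer-coordinates assumption) and that $k > r_i$ forces $k' = k$, so that the stated ranges $l_i < k' < r_i$ versus $k' = k$ are exactly right rather than off by one at an endpoint. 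Everything else is a mechanical matching of cases.
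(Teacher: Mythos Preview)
Your proposal is correct and follows essentially the same approach as the paper: both proofs split on whether $v_i \in S$, invoke Lemma~\ref{lem:interval_elimination} to show $S \setminus \{v_i\}$ stays connected, and then pin down the relationship between $k$, $k' = \max_{v \in S'} r_v$, and $r_i$ using the left-endpoint ordering and the distinct-coordinates assumption. Your treatment of the boundary cases (why $k' \neq r_i$ when $k = r_i$, why $k' = k$ when $k > r_i$, and why $v_i$ is forced into $S$ when $k = r_i$) is if anything slightly more explicit than the paper's, but the argument is the same.
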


\begin{proof}
    We first show that the left-hand side is at most the right-hand side in all cases.
    Let $S \subseteq \{v_1, v_2, \ldots ,v_i\}$ be feasible for $(i, k, d)$ with $|S| = \bcs(i, k, d)$.
    Suppose first that $v_i \notin S$. This implies that $k \neq r_i$.
    In this case, $S$ is also feasible for $(i - 1, k, d)$ and hence we have $\bcs(i, k, d) \le \bcs(i - 1, k, d)$.
    Suppose otherwise that $v_i \in S$. By the definition of feasibility, it holds that $k \ge r_i$.
    If $S = \{v_i\}$, it holds that $\bcs(i, k, d) = 1$ if and only if $k = r_i$ and $d = f(v_i)$.
    Thus, the left-hand side is at most the right-hand side in the third recurrence.
    Let $S' = S \setminus \{v_i\}$ be non-empty. 
    Suppose moreover that $k > r_i$, that is, $\max_{v \in S'} r_v = k$.
    Since $v_i$ has the maximum index among $S$, by Lemma~\ref{lem:interval_elimination}, $G[S']$ is connected.
    Moreover, $|S' \cap R| - |S' \cap B| = |S \cap R| - |S \cap B| - f(v_i)$ holds.
    Therefore, $S'$ is feasible for $(i - 1, k, d - f(v_i))$, and then $\bcs(i, k, d) \le \bcs(i - 1, k, d - f(v_i)) + 1$ follows.
    Finally, if $k = r_i$, it holds that $l_i < \max_{v \in S'}r_v < r_i$.
    This follows from the fact that $S$ is connected and there are no intervals that share end points.
    Similar to the case where $k > r_i$, $S'$ is feasible for $(i - 1, \max_{v \in S'}r_v, d - f(v_i))$.
    Hence, $\bcs(i, k, d) \le \max_{l_i < k' < r_i}\bcs(i - 1, k', d - f(v_i)) + 1$.
    
    For the converse direction, we assume that $k \ge r_i$ since otherwise $\bcs(i - 1, k, d) = \bcs(i, k, d)$.
    Suppose first that $k > r_i$.
    If there is a feasible set for $(i - 1, k, d)$, this is also feasible for $(i, k, d)$ and $\bcs(i, k, d) \ge \bcs(i - 1, k, d)$ follows.
    Let $S'$ be feasible for $(i - 1, k, d - f(v_i))$ with $|S'| = \bcs(i - 1, k, d - f(v_i))$.
    Since the intervals are sorted in their left end and $k > r_i$, $S'$ contains an interval that entirely covers the interval $I_{v_i}$. 
    This means that $S := S' \cup \{v_i\}$ is connected and then feasible for $(i, k, d - f(v_i))$.
    Therefore, we have $\bcs(i, k, d) \ge |S'| + 1$.
    
    Suppose otherwise that $k = r_i$.
    Let $S'$ be feasible for $(i - 1, k', d - f(v_i))$ with $l_i < k' < r_i$ and let $S := S' \cup \{v_i\}$.
    As $S'$ contains an interval whose right end is strictly in between $l_i$ and $r_i$, $S$ is connected, and hence feasible for $(i, k, d)$.
    Therefore, $\bcs(i, k, d) \ge |S'| + 1$.
    Finally, suppose that $k = r_i$, $d = f(v_i)$.
    Even if there is no feasible set for $(i - 1, k', d - f(v_i))$ with $l_i < k' < r_i$, the singleton $\{v_i\}$ can be feasible and hence $\bcs(i, k, d) \ge 1$.
    
    Overall, the right-hand side is at most the left-hand side in all cases.
\end{proof}

\begin{theorem}
    Given an $n$-vertex interval graph, BCS can be solved in $O(n^3)$ time.
\end{theorem}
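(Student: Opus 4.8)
The plan is to read Lemmas~\ref{lem:interval_elimination} and~\ref{lem:interval} as a valid bottom-up dynamic program: compute the whole table $\bcs(\cdot,\cdot,\cdot)$ level by level in $i$, extract the optimum, and then argue that the total work is $O(n^3)$. To set things up, I would first compute an interval representation in linear time~\cite{interval}, relabel the endpoints so that they are distinct integers in $\{1,\dots,2n\}$, and sort the intervals in increasing order of left endpoint so that $l_1<l_2<\dots<l_n$; this is exactly the ordering assumed by Lemma~\ref{lem:interval}. The entries to be filled are $\bcs(i,k,d)$ for $0\le i\le n$, $1\le k\le 2n$, and $-n\le d\le n$, which is $O(n^3)$ entries. (One can restrict $k$ to the at most $n$ values that are right endpoints of $v_1,\dots,v_i$, since no set has $\max_{v\in S}r_v=k$ otherwise, but this does not change the asymptotics.) Because the right-hand side of the recurrence in Lemma~\ref{lem:interval} refers only to entries at level $i-1$, filling the table in the order $i=1,2,\dots,n$ is well founded.

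For correctness, the base case $\bcs(0,k,d)=-\infty$ is immediate from the definition, and Lemma~\ref{lem:interval} (together with its proof, which already checks all the boundary subcases) shows that every value produced by the recurrence equals the intended quantity. Since a balanced connected induced subgraph of $G$ is precisely a non-empty $S\subseteq V$ with $G[S]$ connected and $|S\cap B|-|S\cap R|=0$, the optimal size is $\max_{1\le k\le 2n}\bcs(n,k,d)$ evaluated at $d=0$; if this maximum is $-\infty$ there is no non-empty balanced connected subgraph, and otherwise an optimal set itself is recovered by the standard back-substitution through the recurrence, at a cost no larger than filling the table.

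It remains to bound the running time, and here the only slightly delicate point is to charge the work of the recurrence to each level $i$ in turn. For a fixed $i$, the first case ($k>r_i$) and the last case ($k<r_i$) of Lemma~\ref{lem:interval} take $O(1)$ time for each of the $O(n^2)$ pairs $(k,d)$, contributing $O(n^2)$ at this level. The two expensive cases, the ones that take a maximum of $\bcs(i-1,k',d-f(v_i))$ over $l_i<k'<r_i$, are invoked only when $k=r_i$, i.e.\ for a single value of $k$; for that $k$ and each of the $O(n)$ values of $d$ the maximum ranges over $O(n)$ values of $k'$, so these cases also cost only $O(n^2)$ at this level. Summing over $i=1,\dots,n$ yields $O(n^3)$, which dominates the linear-time preprocessing and the reconstruction. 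The main thing to get right is exactly this observation: the $O(n)$-time branch of the recurrence fires for just one $k$ per level, so the naive $O(n^4)$ estimate collapses to $O(n^3)$; if desired, it could even be reduced further by maintaining running maxima of $\bcs(i-1,\cdot,d)$ over the first coordinate, but that is not needed for the claimed bound.
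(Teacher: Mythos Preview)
Your argument is correct and is essentially the same as the paper's: fill the table $\bcs(i,k,d)$ in increasing $i$ using Lemma~\ref{lem:interval}, and observe that each level costs $O(n^2)$ because the range-maximum branch only fires at the single value $k=r_i$. The paper's proof is terser but identical in content; your explicit justification that the naive $O(n^4)$ estimate collapses to $O(n^3)$ is exactly the point behind the paper's one-line claim.
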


\begin{proof}
    For each $i > 0$, we can evaluate the recurrence in Lemma~\ref{lem:interval} in time $O(n^2)$ using dynamic programming and hence the theorem follows.
\end{proof}

As a special case of the results for interval graphs and trees, BCS on paths can be solved in linear time.
Let $v_1, v_2, \ldots, v_n$ be a path in which $v_i$ and $v_{i+1}$ are adjacent to each other for $1 \le i < n$.
First, we compute $\Left(d)$ that is the minimum integer $i$ such that $|\{v_1, v_2, \ldots, v_i\} \cap B| - |\{v_1, v_2, \ldots, v_i\} \cap R| = d$ for all $d$ with $-n \le d \le n$ and $\Pref(i) = |\{v_1, v_2, \ldots, v_i\} \cap B| - |\{v_1, v_2, \ldots, v_i\} \cap R|$ for all $1 \le i \le n$.
We can compute these values in $O(n)$ time and store them into a table.
Note that $\Left(0) = 0$ and some $\Left(d)$ is defined to be $\infty$ when there is no $i$ satisfying the above condition.
Using these values, for each $1 \le i \le n$, the maximum size of a balanced subpath whose rightmost index is $i$ can be computed by $i - \Left(\Pref(i))$.
Therefore, BCS on paths can be solved in linear time.

\begin{theorem}
    Given an $n$-vertex path, BCS can be solved in $O(n)$ time.
\end{theorem}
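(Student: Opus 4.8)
The plan is to exploit the special structure of a path: every connected induced subgraph of a path $v_1, v_2, \ldots, v_n$ is a contiguous subpath $P_{a,b} := \{v_a, v_{a+1}, \ldots, v_b\}$ for some $1 \le a \le b \le n$ (the analogue of Lemma~\ref{lem:interval_elimination}, but immediate here since deleting any vertex other than an endpoint disconnects the path). So a solution is described by just two indices. First I would introduce the prefix imbalance $\Pref(i) = |\{v_1, \ldots, v_i\} \cap B| - |\{v_1, \ldots, v_i\} \cap R|$ for $0 \le i \le n$, with $\Pref(0) = 0$; these $n+1$ values are produced in one left-to-right $O(n)$ pass. The key reformulation is that the subpath $P_{a,b}$ is balanced if and only if $\Pref(b) - \Pref(a-1) = 0$, i.e. $\Pref(b) = \Pref(a-1)$, while its size is $b - a + 1$. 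Thus BCS on a path is exactly the problem of finding indices $0 \le i < j \le n$ with $\Pref(i) = \Pref(j)$ maximizing $j - i$ (the optimum being the empty subgraph, of size $0$, when no two prefixes agree except trivially).

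Given this reformulation, I would, in a second $O(n)$ pass over $i = 0, 1, \ldots, n$, fill a table $\Left(\cdot)$ defined by $\Left(d) = \min\{i : 0 \le i \le n,\ \Pref(i) = d\}$, writing each entry only the first time that value of $d$ is encountered and leaving it as $\infty$ (or undefined) otherwise. Since every $\Pref(i)$ lies in $[-n, n]$, the table $\Left$ has $2n+1$ slots and is addressable in $O(1)$ time (shifting the index by $n$). Then a third $O(n)$ scan over $j = 1, \ldots, n$ computes, for each candidate rightmost vertex $v_j$, the quantity $j - \Left(\Pref(j))$, which is precisely the size of the largest balanced subpath whose rightmost vertex is $v_j$ (and is $0$ exactly when $\Left(\Pref(j)) = j$, i.e. no earlier prefix matches); the overall maximum of these values, together with the witnessing pair of indices recorded alongside, is the answer. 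The three passes give total running time $O(n)$, which is also clearly optimal since the whole input must be read.

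\textbf{Main obstacle.} There is essentially no deep step: the only things requiring care are the off-by-one correspondence between vertex indices and prefix indices (the subpath $P_{a,b}$ is governed by the prefix pair $a-1$ and $b$, not $a$ and $b$), the observation that a single colored vertex is never balanced, so the optimum is either empty or of even size, and checking that the empty subgraph is handled correctly, which it is because $\Left(0) = 0$ is always set at $i = 0$, so whenever $\Pref(j) = 0$ the full prefix $\{v_1, \ldots, v_j\}$ is recovered. (Alternatively one can simply invoke the interval-graph or tree algorithms of the previous sections as black boxes, but that yields only $O(n^3)$ or $O(n^2)$; the prefix-sum argument is self-contained and gives the sharp linear bound.)
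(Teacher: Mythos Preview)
Your proposal is correct and follows essentially the same approach as the paper: compute the prefix imbalances $\Pref(i)$, record for each value $d$ the earliest index $\Left(d)$ achieving it, and return $\max_j (j-\Left(\Pref(j)))$. You spell out the off-by-one and empty-subgraph details more carefully than the paper does, but the underlying algorithm is identical.
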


\section{Hardness for WBCS}\label{sec:hardness}

In this section, we discuss the hardness of the weighted counterpart of BCS, namely WBCS.
Bhore et al. \cite{bcs} proved that BCS is respectively solvable in polynomial time on trees, split graphs, and properly colored bipartite graphs.
However, we prove in this section that WBCS is hard even on those graph classes.

\begin{theorem}\label{thm:wbcs:star}
    WBCS is NP-hard even on properly colored star graphs.
\end{theorem}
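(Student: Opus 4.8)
The plan is to reduce from the classical \textsc{Partition} problem, which is weakly NP-hard: given positive integers $a_1, \dots, a_n$ with $\sum_i a_i = 2A$, decide whether there is a subset $I \subseteq \{1,\dots,n\}$ with $\sum_{i \in I} a_i = A$. I would build a properly colored star $G$ as follows. Take a center vertex $c$ colored blue with weight $w(c) = A$ (or some carefully chosen weight; see below), and create $n$ leaves $\ell_1,\dots,\ell_n$, all colored red, with $w(\ell_i) = a_i$. Since every leaf is adjacent only to $c$ and $c$ is blue, the graph is a properly colored star. Any connected induced subgraph $H$ of a star either is a single vertex or contains $c$ together with some nonempty subset of leaves; in the latter case the red weight of $H$ is $\sum_{\ell_i \in H} a_i$ and the blue weight is exactly $w(c)$. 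Thus $H$ is balanced iff the chosen leaf set has weight equal to $w(c)$.

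The key step is to arrange the weights so that a balanced subgraph of the required size exists exactly when \textsc{Partition} is a yes-instance. With $w(c) = A$, a balanced subgraph containing $c$ corresponds precisely to a subset $I$ with $\sum_{i\in I} a_i = A$, i.e. to a solution of \textsc{Partition}; its weight is then $2A$ and its vertex count is $|I|+1$. The only balanced subgraphs not of this form are single uncolored—wait, there are no uncolored vertices, so a single vertex is never balanced (blue weight and red weight cannot both be nonzero-and-equal with one vertex, and the all-zero case is excluded since weights are positive); hence the empty/trivial cases contribute nothing. So I would ask the decision question: does $G$ have a balanced connected subgraph of total weight at least $2A$? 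This holds iff some leaf subset sums to exactly $A$ (a larger sum is impossible as all leaves together weigh $2A$ and equal the blue weight only in the partition case), iff the \textsc{Partition} instance is a yes-instance. Since WBCS is an optimization problem, I would phrase this as: the optimum weight of a balanced connected subgraph of $G$ equals $2A$ iff \textsc{Partition} has a solution, and is $0$ (no balanced subgraph, or we pad with a dummy balanced pair of weight-$1$ leaf-leaf... but leaves aren't adjacent) — more cleanly, if no subset sums to $A$ then no nonempty balanced subgraph exists at all, so the distinction is "optimum $> 0$" versus "no feasible solution." To make the instance always feasible (cleaner for an optimization reduction), I can add two extra leaves of weight... but leaves of a star are mutually nonadjacent, so I instead note that WBCS's feasibility already encodes the answer, which suffices for NP-hardness.

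The construction and correctness are essentially immediate once set up this way; the main thing to get right is the bookkeeping in the reduction—specifically making sure the coloring is genuinely \emph{proper} (all leaves one color, center the other, which forces the star to be properly 2-colored), that weights stay positive integers as required by the problem definition, and that the "trivial" connected subgraphs (single vertices) are correctly accounted for so that the reduction is airtight in the optimization formulation. There is no real obstacle beyond this careful case analysis; the hardness follows because \textsc{Partition} is NP-complete and the reduction is clearly polynomial-time. I would close by remarking that this NP-hardness is only \emph{weak}, which is consistent with—and best possible in light of—Theorem~3, which gives a pseudo-polynomial $O(nW^2)$ algorithm for WBCS on trees (hence on stars).
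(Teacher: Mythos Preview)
Your proposal is correct and follows essentially the same approach as the paper: put the target value as the weight of a blue center and the input numbers as weights of red leaves, so that a balanced connected subgraph exists iff some subset of the numbers hits the target. The only cosmetic difference is that the paper reduces from \textsc{Subset Sum} while you reduce from \textsc{Partition}; your added remark that the hardness is only weak (matching the pseudo-polynomial algorithm on trees) is also in line with the paper.
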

\begin{proof}
    We can easily encode the subset sum problem into WBCS on star graphs.
    The subset sum problem asks for, given a set of integers $S = \{s_1, s_2, \ldots, s_n\}$ and an integer $B$, a subset $S' \subseteq S$ whose sum is exactly $B$, which is known to be NP-complete~\cite{GJ79}. 
    We take a blue vertex of weight $B$, add a red vertex of weight $s_i$ for each $s_i \in S$, and make adjacent each red vertex to the blue vertex.
    It is easy to see that the obtained graph has a feasible solution if and only if the instance of the subset sum problem has a feasible solution.
\end{proof}

Let us note that WBCS can be solved in pseudo-polynomial time on trees using the algorithm described in Section~\ref{sec:tree}.
However, WBCS is still hard on properly colored bipartite graphs and split graphs even if the total weight is polynomially upper bounded.

\begin{theorem}\label{thm:wbcs:bipartite}
    WBCS is strongly NP-hard even on properly colored bipartite graphs.
\end{theorem}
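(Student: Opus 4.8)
The plan is to reduce from a strongly NP-hard problem with unary-encoded (or polynomially bounded) numbers; the natural candidate is \textsc{3-Partition} (or \textsc{Numerical 3-Dimensional Matching}), which remains NP-hard even when all input integers are bounded by a polynomial in the number of items, hence every weight we introduce will be polynomially bounded and the hardness will be strong. Recall that in \textsc{3-Partition} we are given $3m$ positive integers $a_1,\dots,a_{3m}$ and a bound $T$ with $\sum_i a_i = mT$ and $T/4 < a_i < T/2$, and we ask whether the $a_i$ can be partitioned into $m$ triples each summing to exactly $T$; the side constraint $T/4<a_i<T/2$ forces every part of a valid partition to have size exactly three.

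The construction I would use is a properly colored ``double star'': introduce $m$ blue \emph{bin} vertices $b_1,\dots,b_m$, each of weight $T$, and $3m$ red \emph{item} vertices $c_1,\dots,c_{3m}$, where $c_i$ has weight $a_i$. To keep the graph connected and bipartite while respecting proper coloring, add one auxiliary blue vertex $x$ of weight $0$ adjacent to every bin vertex' s ``connector'' — more cleanly: make each item vertex $c_i$ adjacent to every bin vertex $b_j$ (a complete bipartite graph between reds and blues), which is bipartite and properly colored since all bins are blue and all items are red. A connected induced subgraph $H$ that is balanced must then satisfy $\sum_{b_j\in H} T = \sum_{c_i \in H} a_i$; choosing $H$ to contain all $3m$ item vertices forces $\sum a_i = mT$ worth of item weight, which equals the total bin weight only if all $m$ bins are present, and the maximum weight solution is achieved exactly there. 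The subtlety is that balancedness alone does not record \emph{which} items go with \emph{which} bin, so I would instead argue the other direction: to certify a genuine partition I will need local gadgets. A robust way is to replace each bin $b_j$ by a blue vertex of weight $T$ attached to three ``slot'' structures, and force, via large penalties or via the maximization objective combined with careful weight choices (e.g. scaling all weights by a large factor and adding small perturbations), that an optimal balanced subgraph must pick a vertex set that decomposes into $m$ stars, each a bin plus exactly the three items assigned to it. This per-bin local balance is what encodes ``triple summing to $T$''.

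The key steps, in order, would be: (1) fix the source problem as \textsc{3-Partition} with polynomially bounded integers and state the instance; (2) describe the bipartite, properly 2-colored gadget graph and the weight function, checking that all weights are bounded by a polynomial in $m$ and $\max_i a_i$; (3) forward direction — given a valid partition into triples, exhibit the corresponding connected balanced induced subgraph and compute its weight, showing it attains the claimed optimum; (4) backward direction — show that any balanced connected induced subgraph of weight at least the target value must, because of the weight choices and the $T/4<a_i<T/2$ constraint, consist of all $m$ bins and all $3m$ items arranged so that each bin is locally balanced, hence yields a valid triple partition; (5) conclude that WBCS restricted to properly colored bipartite graphs is strongly NP-hard.

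The main obstacle is step (4): a single global balance equation $\sum_{\text{blue}} w = \sum_{\text{red}} w$ is far weaker than ``every bin is locally balanced,'' so the naive complete-bipartite construction does not suffice. The crux of the proof is to design the gadget — likely by subdividing or by attaching forced-inclusion pendant vertices, or by a weight-scaling trick that makes any deviation from a clean star decomposition strictly suboptimal — so that connectivity plus the extremal weight requirement \emph{force} the per-bin structure. Getting this gadget both correct and bipartite-and-properly-colored simultaneously, while keeping all weights polynomially bounded (for \emph{strong} hardness), is the delicate part; the rest is routine verification.
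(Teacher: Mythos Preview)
Your proposal has a genuine gap, and you essentially acknowledge it yourself: the crux is step~(4), and you never supply the gadget that would force per-bin local balance. Without it there is no reduction. Indeed, your naive complete bipartite construction $K_{m,3m}$ fails outright: since $\sum_i a_i = mT$, the \emph{entire} graph is a balanced connected subgraph of weight $2mT$ regardless of whether a valid \textsc{3-Partition} exists, so the maximum value carries no information about the instance. Any fix via ``slot'' subgadgets, pendants, or weight perturbations must somehow turn a single global balance equation into $m$ independent local equations while preserving proper 2-coloring, bipartiteness, and polynomial weights; you do not show how to do this, and it is not routine.

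The paper avoids this obstacle entirely by choosing a different source problem, \textsc{Exact 3-Cover}, and letting \emph{connectivity} do the structural work instead of trying to encode local balance numerically. Concretely: element vertices (blue, weight~$1$) are joined only to the set vertices (red, weight~$n^2$) that contain them, and a single hub vertex $w$ (blue, weight $k(n^2-3)$) is joined to every set vertex. The total blue weight is exactly $kn^2$, so a balanced subgraph of weight $2kn^2$ must include every blue vertex and exactly $k$ red set vertices; connectivity then forces each element vertex to be adjacent to some chosen set vertex, i.e.\ the $k$ chosen sets cover all $3k$ elements, hence form an exact cover. No local gadgetry is needed because the ``which items go with which bin'' question is replaced by a coverage question that connectivity enforces for free.
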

\begin{proof}
    The reduction is performed from the Exact 3-Cover problem, where given a finite set $E$ and a collection of three-element subsets $\mathcal F = \{S_1, S_2, \ldots, S_n\}$ of $E$, the goal is to find a subcollection $\mathcal F' \subseteq \mathcal F$ such that $\mathcal F'$ is mutually disjoint and entirely covers $E$.
    This problem is known to be NP-complete~\cite{GJ79}.
    
    For an instance ($E$, $\mathcal F$) of the Exact 3-Cover problem, we construct a bipartite graph $G = (V_E \cup V_{\mathcal F} \cup \{w\}, E_{\mathcal F} \cup E_w)$ as:
    $V_E = \{v_e : e \in E\}$, $V_{\mathcal F} = \{V_S : S \in \mathcal F\}$, $E_{\mathcal F} = \{\{v_e, v_S\} : e \in E, S \in \mathcal F, v_e \in S\}$, and $E_w = \{\{w, v_S\}: S \in \mathcal F\}$.
    We color the vertices of $V_{\mathcal F}$ with red and the other vertices with blue.
    Indeed, the graph obtained is bipartite and properly colored.
    We may assume that $n = 3k$ for some integer $k$.
    We assign weight one to each $v_e \in V_E$, weight $n^2$ to each $v_S \in V_{\mathcal F}$, and weight $k(n^2 - 3)$ to $w$.
    In the following, we show that $\mathcal F$ has a solution if and only if $G$ has a solution of total weight at least $2kn^2$.
    
    Let $\mathcal F' \subseteq \mathcal F$ be a solution of Exact 3-Cover.
    We choose $w$, all the vertices of $V_E$, and $v_S$ for each $S \in \mathcal F$. Clearly, the chosen vertices have total weight $2kn^2$.
    As $\mathcal F'$ covers $E$, every vertex in $V_E$ is adjacent to some $v_S$, which is chosen as our solution.
    Moreover, every vertex in $V_{\mathcal F}$ is adjacent to $w$. This implies that the chosen vertices are connected.
    Therefore, $G$ has a feasible solution of total weight at least $2kn^2$.
    
    Conversely, let $U \subseteq V_E \cup V_{\mathcal F} \cup \{w\}$ be connected in $G$ with total weight at least $2kn^2$.
    Since the total weight of the blue vertices in $G$ is $kn^2$, we can assume that the total weight of $U$ is exactly $2kn^2$.
    This means that $U$ contains exactly $k$ vertices of $V_{\mathcal F}$.
    Let $\mathcal F' \subseteq \mathcal F$ be the subsets corresponding to $U \cap V_{\mathcal F}$.
    We claim that $\mathcal F'$ is a solution of Exact 3-Cover. To see this,
    suppose that $\mathcal F'$ does not cover $e \in E$.
    Since $U$ is connected, $v_e$ has a neighbor $v_S$ in $U \cap V_{\mathcal F}$, contradicting that $e$ is not covered by $\mathcal F'$.
\end{proof}

\begin{theorem}\label{thm:wbcs:split}
    WBCS is strongly NP-hard even on split graphs.
\end{theorem}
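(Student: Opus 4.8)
The plan is to reduce from the Exact 3-Cover problem, exactly as in the proof of Theorem~\ref{thm:wbcs:bipartite}, but to reshape the gadget so that the ``set side'' becomes a clique. Recall that an instance consists of a ground set $E$ with $|E| = 3q$ and a collection $\mathcal F = \{S_1,\dots,S_m\}$ of three-element subsets of $E$, and that a solution is a subcollection $\mathcal F' \subseteq \mathcal F$ of exactly $q$ sets covering $E$ (necessarily a partition). From such an instance I would build a split graph $G$ whose clique part is $V_{\mathcal F} = \{v_S : S \in \mathcal F\}$, all colored red with weight $3$, and whose independent part is $V_E = \{v_e : e \in E\}$, all colored blue with weight $1$; the edges are all pairs inside $V_{\mathcal F}$ together with $\{v_e, v_S\}$ whenever $e \in S$. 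This is visibly a split graph, and since every weight is a constant, the numbers occurring in the instance are bounded by a polynomial in the input size, so an NP-hardness proof for this construction yields strong NP-hardness.

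The claim to establish is that $(E,\mathcal F)$ has a solution if and only if $G$ has a balanced connected induced subgraph of total weight at least $6q$. For the forward direction, given a solution $\mathcal F'$, I take $H$ induced by all of $V_E$ together with $\{v_S : S \in \mathcal F'\}$: the red weight is $3|\mathcal F'| = 3q$ and the blue weight is $|E| = 3q$, so $H$ is balanced of weight $6q$, and it is connected because the chosen red vertices form a clique and each $v_e$ is adjacent to the chosen $v_S$ with $e \in S$.

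For the converse, suppose $H$ is a balanced connected subgraph of weight at least $6q$ (we may assume $q \ge 1$). Its blue weight is at most $|E| = 3q$, and by balance its total weight equals twice its blue weight; hence the total weight is exactly $6q$, the blue weight is exactly $3q$ (so $V_E \subseteq V(H)$), and the red weight is exactly $3q$ (so $V(H)$ contains exactly $q$ vertices of $V_{\mathcal F}$, corresponding to some $\mathcal F' \subseteq \mathcal F$). Since $V_E$ is independent and $H$ is connected with $|V(H)| > 1$, every $v_e$ must have a neighbor in $V(H) \cap V_{\mathcal F}$, i.e.\ $e$ lies in some member of $\mathcal F'$; thus the $q$ three-element sets in $\mathcal F'$ cover all $3q$ elements of $E$, which forces them to be pairwise disjoint, so $\mathcal F'$ is an exact cover.

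The construction and both directions are essentially routine; the only point requiring a little care is the accounting in the converse, namely that the threshold $6q$ together with the balance equation leaves no slack and pins $V(H)$ down to ``all blue vertices plus exactly $q$ red vertices'', after which the clique structure and connectivity do the rest. I would in particular verify that no alternative configuration (fewer blue vertices offset by fewer or cheaper red vertices, or extra red vertices) can reach weight $6q$; this is immediate because the total weight equals $2\sum_{v \in V(H)\cap B} w(v) \le 2|E| = 6q$ with equality only in the stated configuration.
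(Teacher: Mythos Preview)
Your proof is correct and follows the same reduction from Exact 3-Cover as the paper, with the set side turned into a clique. The paper, however, literally reuses the bipartite gadget of Theorem~\ref{thm:wbcs:bipartite}: it keeps the extra blue vertex $w$ of weight $k(n^2-3)$ and the red weights $n^2$, and then simply adds all edges inside $V_{\mathcal F}$. Your version drops $w$ and uses constant weights $1$ and $3$; this works because once $V_{\mathcal F}$ is a clique the vertex $w$ is no longer needed for connectivity, and the tight balance equation $\text{(total)}=2\cdot\text{(blue weight)}\le 2|E|=6q$ already forces the intended configuration. So the argument is the same in spirit but your construction is a genuine simplification of the paper's.
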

\begin{proof}
    Recall that a graph is {\em split} if the vertex set can be partitioned into a clique and an independent set.
    The proof is almost the same with Theorem~\ref{thm:wbcs:bipartite}.
    In the proof of Theorem~\ref{thm:wbcs:bipartite}, we construct a bipartite graph $G$ that has a solution of total weight $2kn^2$ if and only if the instance of the Exact 3-Cover problem has a solution.
    We construct a split graph $G'$ from $G$ by adding an edge for each pair of vertices of $V_{\mathcal F}$.
    Analogously, we can show that $G'$ has a solution of total weight $2kn^2$ if and only if the instance of the Exact 3-Cover problem has a solution.
\end{proof}

\section{General graphs}\label{sec:general}

Since BCS is NP-hard~\cite{bcs}, efficient algorithms for general graphs are unlikely to exist.
From the viewpoint of exact exponential-time algorithms, the problem can be solved in time $O^*(1.709^n)$ using the algorithm due to Bj\"orklund et al. \cite{motif}, discussed in Section~\ref{sec:intro}.
In this section, we improve this running time to $O^*(2^{n/2})$ by modifying the well-known Dreyfus-Wagner algorithm for the minimum Steiner tree problem~\cite{DW}.

Before describing our algorithm, we briefly sketch the Dreyfus-Wagner algorithm and its improvement by Bj\"orklund et al.~\cite{subsetconv}.
The minimum Steiner tree problem asks for, given a graph $G = (V, E)$ and a terminal set $T \subseteq V$, a connected subgraph of $G$ that contains all the vertices of $T$ having the least number of edges.
The Dreyfus-Wagner algorithm solves the minimum Steiner tree problem in time $O^*(3^{|T|})$ by dynamic programming.
For $S \subseteq T$ and $v \in V$, we denote by $\opt(S, v)$ the minimum number of edges in a connected subgraph of $G$ that contains $S \cup \{v\}$.
Assume that $|S \cup \{v\}| \ge 3$ as otherwise, $\opt(S, v)$ can be computed in polynomial time.
Let $F$ be a connected subgraph that contains $S \cup \{v\}$ with $|E(F)| = \opt(S, v)$.
Note that $F$ must be a tree as otherwise we can delete at least one edge from $F$ without being disconnected.
A key observation for applying the below algorithm is that every leaf of $F$ belongs to $S \cup \{v\}$.
This enables us to decompose $F$ into (at most) three parts.
Suppose first that $v$ is a leaf of $F$.
Then, there is $w \in V(F)$ such that the edge set of $F$ can be partitioned into three edge disjoint subtrees $F_1$, $F_2$, and $F_3$: $F_1$ is a shortest path between $v$ and $w$, $F_2$ and $F_3$ induce a non-trivial partition of $S \cup \{w\}$, that is, $V(F_2) \cap (S \cup \{w\})$ and $V(F_3) \cap (S \cup \{w\})$ are both non-empty.
Suppose otherwise that $v$ is an internal vertex of $F$. We can also partition the edges of $F$ into two edge disjoint subtrees $F_1$ and $F_2$ such that $F_1$ contains $S' \cup \{w\}$ and $F_2$ contains $(S \cup \{w\}) \setminus S'$ for some non-empty proper subset $S'$ of $S$.
This leads to the following recurrence.
\[
    \opt(S, v) = \min_{w \in V}\left\{d(v, w) + \min_{\substack{S' \subset S\\S' \neq \emptyset}} \left(\opt(S', w) + \opt(S \setminus S', w)\right)\right\}.
\]
Note that if $v$ is an internal vertex, the minimum is attained when $v = w$ in the above recurrence.
A naive evaluation of this recurrence takes $O^*(3^{|T|})$ time in total.
Bj\"orklund et al. \cite{subsetconv} proposed a fast evaluation technique for the above recurrence known as the {\em fast subset convolution}, described in Theorem~\ref{thm:fsc}, which allows us to compute $\opt(S, v)$ for all $S \subseteq T$ and $v \in V$ in total time $O^*(2^{|T|})$.

\begin{theorem}[\cite{subsetconv}]\label{thm:fsc}
    Let $U$ be a finite set.
    Let $M$ be a positive integer and let $f, g: 2^U \rightarrow \{0, 1, \ldots, M, \infty\}$.
    Then, the subset convolution over the min-sum semiring
    \[
        (f\ast g)(X) = \min_{Y \subseteq X} (f(Y) + g(X \setminus Y))
    \]
    can be computed in $2^{|U|}(|U| + M)^{O(1)}$ time in total for all $X \subseteq U$.
\end{theorem}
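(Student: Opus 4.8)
The plan is to reduce the min-sum (tropical) subset convolution to the ordinary subset convolution over a commutative ring, and then to give a fast evaluation of the latter. Concretely, I would proceed in two stages. First, I would establish that for any commutative ring $R$ and any $f,g:2^U\to R$, the ring convolution $(f\star g)(X)=\sum_{Y\subseteq X}f(Y)\,g(X\setminus Y)$ can be computed for all $X\subseteq U$ using only $2^{|U|}|U|^{O(1)}$ ring operations. Second, I would encode the min-sum semiring inside a polynomial ring so that each value of $(f\ast g)(X)$ can be read off from the corresponding value of the ring convolution.

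For the first stage, the key device is the \emph{rank trick}. Any pair $(Y,X\setminus Y)$ contributing to $(f\star g)(X)$ satisfies $|Y|+|X\setminus Y|=|X|$, so after splitting $f$ and $g$ by cardinality --- writing $f_i(Y)=f(Y)$ if $|Y|=i$ and $0$ otherwise, and likewise $g_j$ --- the contributions to a set $X$ come exactly from indices with $i+j=|X|$. I would compute the subset zeta transforms $\hat f_i(Z)=\sum_{Y\subseteq Z}f_i(Y)$ and $\hat g_j(Z)=\sum_{Y\subseteq Z}g_j(Y)$ for all $0\le i,j\le|U|$; each transform costs $O(2^{|U|}|U|)$ ring operations via the standard subset-sum recurrence, and there are $O(|U|)$ of them. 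For each rank $k$, I would form $\hat h_k(Z)=\sum_{i+j=k}\hat f_i(Z)\hat g_j(Z)$ pointwise, apply the subset Möbius inversion $h_k(X)=\sum_{Y\subseteq X}(-1)^{|X|-|Y|}\hat h_k(Y)$, and set $(f\star g)(X)=h_{|X|}(X)$. The total is $2^{|U|}|U|^{O(1)}$ ring operations.

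For the second stage, I would map each finite value $a\in\{0,1,\ldots,M\}$ to the monomial $x^{a}\in\mathbb{Z}[x]$ and map $\infty$ to the zero polynomial; then ``$\min$'' corresponds to polynomial addition and ``$+$'' to polynomial multiplication, and every exponent that ever arises is at most $2M$ since $f(Y)+g(X\setminus Y)\le 2M$. Under this encoding the ring convolution value at $X$ is $\sum_{Y\subseteq X}x^{f(Y)+g(X\setminus Y)}$ (with $\infty$-terms contributing $0$), whose coefficient of $x^{k}$ is the number of splits $Y\subseteq X$ achieving $f(Y)+g(X\setminus Y)=k$; as these counts are nonnegative, no cancellation can hide a realizable value, so $(f\ast g)(X)$ is the least $k$ with a positive coefficient, and the all-zero polynomial signals $(f\ast g)(X)=\infty$. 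Each ``ring operation'' now manipulates a polynomial of degree at most $2M$ whose integer coefficients have absolute value at most $2^{|U|}$, so every such operation runs in $(|U|+M)^{O(1)}$ time; combined with the $2^{|U|}|U|^{O(1)}$ ring-operation bound this yields the stated $2^{|U|}(|U|+M)^{O(1)}$ running time, followed by one pass extracting the least nonzero exponent of each convolved polynomial.

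I expect the main obstacle to be the correctness of the rank trick --- in particular, verifying that taking $(f\star g)(X)=h_{|X|}(X)$ exactly discards the spurious contributions of pairs $(Y,X\setminus Y)$ whose union is a proper subset of $X$ (which occur precisely when the ``$X\setminus Y$'' slot is interpreted as an arbitrary subset rather than the true complement) --- together with the bookkeeping that the polynomial degrees and coefficient bit-lengths stay within the stated bounds, so that each emulated ring operation is genuinely polynomial-time. The semiring-to-ring embedding itself is routine once the cancellation-freeness of nonnegative polynomials is observed, and note that no truncation is even required because the relevant degrees never exceed $2M$.
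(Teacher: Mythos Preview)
The paper does not prove this theorem; it merely quotes it from Bj\"orklund, Husfeldt, Kaski, and Koivisto (reference \cite{subsetconv}) and uses it as a black box. Your proposal correctly reconstructs the original proof from that source: the ranked zeta/M\"obius transform computes the ring subset convolution in $2^{|U|}|U|^{O(1)}$ ring operations, and the embedding $a\mapsto x^{a}$, $\infty\mapsto 0$ into $\mathbb{Z}[x]/(x^{2M+1})$ (truncation is optional, as you note) turns min--sum into add--multiply so that the least nonzero exponent recovers the tropical value. One minor bookkeeping point: during the intermediate zeta, pointwise product, and M\"obius steps the integer coefficients can grow beyond $2^{|U|}$ in absolute value (roughly to $|U|\cdot M\cdot 4^{|U|}$), but their bit-length is still $O(|U|+\log M)$, so each ring operation remains $(|U|+M)^{O(1)}$ time and the overall bound holds.
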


For our problem, namely BCS, we first solve a variant of the minimum Steiner tree problem defined as follows.
Let $G = (V, E)$ be the instance of BCS.
Without loss of generality, we assume that $|R| \le |B|$.
For $S \subseteq R$ and $v \in V \setminus S$, we compute $\opt'(S, v)$ the minimum number of edges of a tree connecting all the vertices of $S \cup \{v\}$ and {\em excluding any vertex of $R \setminus (S \cup \{v\})$}.
The recurrence for $\opt'$ is quite similar to one for the ordinary minimum Steiner tree problem, but an essential difference from the above recurrence is that the shortest path between $v$ and $w$ does not contain any red vertex other than $S \cup \{v\}$. 

\begin{lemma}
    For $S \subseteq R$ and for $v \in V$,
    \[
    \opt'(S, v) = \min_{w \in V \setminus (R \setminus (S \cup \{v\}))}\left\{d'(v, w) + \min_{\substack{S' \subset S\\S' \neq \emptyset}} \left(\opt'(S', w) + \opt'(S \setminus S', w)\right)\right\},
    \]
    where $d'(v, w)$ is the number of edges in a shortest path between $v$ and $w$ excluding red vertices except for its end vertices.
    If there is no such a path, $d'(v, w)$ is defined to be $\infty$.
\end{lemma}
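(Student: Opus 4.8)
The plan is to establish the two inequalities $\opt'(S,v) \le \text{(RHS)}$ and $\opt'(S,v) \ge \text{(RHS)}$ separately, mirroring the standard correctness argument for the Dreyfus--Wagner recurrence for $\opt$ recalled above but carrying along the side constraint that no vertex of $R \setminus (S \cup \{v\})$ is ever used. I would first dispose of the degenerate cases: following the convention of the preceding paragraph I take $v \notin S$, and when $|S \cup \{v\}| \le 2$ the value $\opt'(S,v)$ is $0$ or $d'(v,s)$ and is matched directly by the recurrence with $w = v$; I also read the recurrence as evaluating to $\infty$ when its index set is empty, so that the first inequality below automatically gives $\text{(RHS)} = \infty$ whenever $\opt'(S,v) = \infty$.

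For $\opt'(S,v) \le \text{(RHS)}$ I would fix any $w \in V \setminus (R \setminus (S \cup \{v\}))$ and any non-empty proper $S' \subset S$ for which the right-hand expression is finite, pick a shortest red-avoiding $v$--$w$ walk $P$ with $d'(v,w)$ edges, an optimal tree $F_1$ for $(S', w)$, and an optimal tree $F_2$ for $(S \setminus S', w)$. Their union is connected since all three pieces pass through $w$, it contains $S \cup \{v\}$, and its red vertices lie in $(S' \cup (S \setminus S') \cup \{v, w\}) \cap R \subseteq S \cup \{v\}$ --- here using that $w$ is not a forbidden red vertex and that the interior of $P$ contains no red vertex. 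Hence a spanning tree of the union is feasible for $(S,v)$, has at most $d'(v,w) + \opt'(S',w) + \opt'(S \setminus S', w)$ edges, and minimizing over $w$ and $S'$ gives the bound.

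The other direction is the substantive one. Let $F$ realise $\opt'(S,v)$ and assume this value is finite. By minimality of $|E(F)|$, every leaf of $F$ lies in $S \cup \{v\}$. If $v$ is not a leaf of $F$, I set $w := v$; each component of $F - v$ contains a vertex of $S$ (it has a leaf of $F$, which cannot be $v$), and since $\deg_F(v) \ge 2$ these components split into two non-empty groups, giving edge-disjoint subtrees $F_1, F_2$ of $F$ through $v$ whose $S$-parts $S_2$ and $S \setminus S_2$ are both non-empty. If $v$ is a leaf of $F$, I instead walk from $v$ along $F$ to the first vertex $w$ with $\deg_F(w) \ge 3$ or $w \in S$; such a $w$ exists and is distinct from $v$ because $F$ has a leaf other than $v$, necessarily in $S$. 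The interior vertices of the $v$--$w$ path have degree $2$ and lie outside $S$, hence are not red, so this path is red-avoiding and has at least $d'(v,w)$ edges; deleting $v$ and the path's interior leaves a subtree of $F$ spanning $w$ together with all of $S$, which (repeating the branching argument at $w$, using $\deg_F(w) \ge 3$ when $w \notin S$) splits into edge-disjoint $F_1, F_2$ with non-empty proper $S$-part $S_2$. In either case $w$ is not a forbidden red vertex, $F_1$ is feasible for $(S_2, w)$ and $F_2$ for $(S \setminus S_2, w)$ since both inherit from $F$ the exclusion of $R \setminus (S \cup \{v\})$, and edge-disjointness yields $\opt'(S,v) = |E(F)| \ge d'(v,w) + |E(F_1)| + |E(F_2)| \ge d'(v,w) + \opt'(S_2, w) + \opt'(S \setminus S_2, w) \ge \text{(RHS)}$.

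The hard part is exactly this last case distinction: proving that the split vertex $w$ always exists, that the two subtrees can be selected so that each absorbs a non-empty portion of $S$ (which is precisely where $|S \cup \{v\}| \ge 3$ is used), and that the $v$--$w$ path extracted from $F$ contains no forbidden red vertex so that its length is at least $d'(v,w)$. Once the decomposition is set up, feasibility of the two subtrees and the counting of edges are routine, and the remaining small-$S$ and $v \in S$ situations can be checked directly or reduced to the main case by taking $w = v$.
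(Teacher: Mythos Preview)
Your argument is correct and follows exactly the paper's approach: the paper's own proof is a one-paragraph sketch invoking the Dreyfus--Wagner decomposition and noting that the extracted $v$--$w$ path in $F$ avoids forbidden red vertices, and you have supplied precisely those details (both inequalities, the leaf/non-leaf case split at $v$, and why interior path vertices cannot be red). One small slip: the degenerate cases $|S|\le 1$ are not ``matched directly by the recurrence with $w=v$'' since the inner minimum over non-empty proper $S'\subset S$ is then vacuous, but the paper likewise treats these as base cases computed separately rather than via the recurrence.
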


\begin{proof}
    The idea of the proof is analogous to the Dreyfus-Wagner algorithm for the ordinary Steiner tree problem. 
    Suppose that $F$ is an optimal solution that contains every vertex in $S \cup \{v\}$ and does not contain every vertex in $R \setminus (S \cup \{v\})$.
    Similarly, we can assume that every leaf of $F$ belongs to $S \cup \{v\}$ as otherwise such a leaf can be deleted without losing feasibility.
    Then, the edge set of $F$ can be partitioned into three edge disjoint subtrees $F_1$, $F_2$, and $F_3$ such that $F_1$ is a shortest path between $v$ and $w$, $F_2$ and $F_3$ induces a non-trivial bipartition of $S$.
    The only difference from the ordinary Steiner tree problem is that $F_1$, $F_2$ and $F_3$ should avoid any irrelevant red vertex.
    This can be done since $F_1$ does avoid such a vertex.
\end{proof}

Similar to the normal Steiner tree problem, we can improve the naive running time $O^*(3^{|R|})$ to $O^*(2^{|R|})$ by the subset convolution algorithm in Theorem~\ref{thm:fsc}.

Now, we are ready to describe the final part of our algorithm for BCS. We have already know $\opt'(S, v)$ for all $S \subseteq V$ and $v \in V \setminus (R \setminus S)$.
Suppose $\opt'(S, v) < \infty$. Let $F$ be a tree with $|E(F)| = \opt'(S, v)$ such that $V(F) \cap R = (S \cup \{v\}) \cap R$.
Such a tree can be obtained in polynomial time using the standard traceback technique.
Since $F$ is a tree, we know that $|V(F)| = \opt'(S, v) + 1$.
Let $k$ be the number of red vertices in $F$.
If $F$ contains more than $k$ blue vertices, we can immediately conclude that there is no balanced connected subgraph $H$ with $V(H) \cap R = V(F) \cap R$.
Suppose otherwise.
The following lemma ensures that an optimal solution of BCS can be computed from some Steiner tree.

\begin{lemma}
    Let $R_F = V(F) \cap R$. If there is a balanced connected subgraph $H$ with $V(H) \cap R = R_F$, then there is a balanced connected subgraph $H'$ with $V(S)' \cap R = R_F$ such that $F$ is a subtree of $H'$. Moreover, such a subgraph $H'$ can be constructed in linear time from $F$.
\end{lemma}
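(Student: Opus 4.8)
The plan is to build $H'$ by \emph{growing} the given tree $F$ with blue vertices only, never adding a red vertex, until the number of blue vertices reaches $k := |R_F|$; the hypothesis that \emph{some} balanced $H$ with $V(H)\cap R = R_F$ exists will be used precisely to guarantee that enough blue vertices are available to do this. First I would pass to the graph $G' = G - (R \setminus R_F)$ obtained by deleting the ``irrelevant'' red vertices. Since $F$ is connected, contains $R_F$, and avoids $R \setminus R_F$, all of $V(F)$ lies in a single connected component $C$ of $G'$; the same argument applies to $H$, and as $H$ meets $R_F \subseteq V(C)$ we get $V(H) \subseteq V(C)$, hence $|V(C) \cap B| \ge |V(H) \cap B| = k$. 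Note also that $V(C)\cap R = R_F$ (the only red vertices surviving in $G'$ are those of $R_F$, and $R_F \subseteq V(F) \subseteq V(C)$), and that, by the ``suppose otherwise'' assumption preceding the lemma, $|V(F) \cap B| \le k$.

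Next I would carry out the growth step. Start with $H' := F$. As long as $|V(H') \cap B| < k$, I claim there is a blue vertex of $C$ adjacent to $V(H')$: since the current blue count is below $k \le |V(C) \cap B|$, the set $V(C) \setminus V(H')$ is nonempty, so by connectivity of $C$ there is an edge of $C$ from $V(H')$ to some $u \in V(C) \setminus V(H')$; and $u$ cannot be red, because $V(C) \cap R = R_F \subseteq V(F) \subseteq V(H')$. Add such a $u$ to $H'$. This keeps the induced subgraph $H'$ connected, keeps $V(H')\cap R = R_F$ (we only ever add blue vertices), and raises the blue count by exactly one, so after finitely many steps $|V(H') \cap B| = k$. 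At that point $|V(H')| = 2k$ with equal red and blue counts, $H'$ is connected, and $F$ is a subtree of $H'$, as required.

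For the running time, a single BFS or DFS started from $V(F)$ in $G$ that simply refuses to traverse any vertex of $R \setminus R_F$ both explores $C$ and enumerates its blue vertices; halting as soon as $k - |V(F)\cap B|$ new (necessarily blue) vertices have been collected yields $H'$ in $O(n+m)$ time, and since $|V(C)\cap B| \ge k$ the search never exhausts before meeting the quota. The case $R_F = \emptyset$ is degenerate: then $F$ and $H$ are both empty and one takes $H' = \emptyset$. The step I expect to be the crux is this pairing of the ``boundary vertex is blue'' observation with the counting inequality $|V(C)\cap B| \ge k$: it is exactly here that the \emph{existence of a balanced $H$} (as opposed to merely a tree through $R_F$) is invoked, and it is what rules out getting stuck with too few blue vertices while being forbidden to add any red vertex.
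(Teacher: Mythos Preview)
Your proof is correct and follows essentially the same approach as the paper: both greedily grow $F$ by attaching adjacent blue vertices until balance is reached, using the existence of $H$ only to certify that enough blue vertices are reachable. Your correctness argument via the connected component of $G' = G - (R\setminus R_F)$ is a clean alternative to the paper's shortest-path-in-$H$ contradiction, but the underlying idea is the same.
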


\begin{proof}
    We prove this lemma by giving a linear time algorithm that constructs a balanced connected subgraph $H'$ when given $F$ as in the statement of the lemma.
    Suppose that there is a balanced connected subgraph $H$ with $V(H) \cap R = R_F$. Since $F$ is a minimum Steiner tree with $V(F) \cap R = R_F$, the number of blue vertices in $F$ is not larger than that of red vertices.
    We greedily add a blue vertex that has a neighbor in $F$ to $F$ as long as it is not balanced.
    We claim that it is possible to construct a balanced connected subgraph using this procedure.
    Let $H'$ be a maximal graph that is obtained by the above procedure. Suppose for contradiction that $|V(H')| < |V(H)|$.
    Let $v \in (V(H) \setminus V(H')) \cap B$ be a blue vertex and let $r$ be a red vertex in $F$.
    We choose $v$ and $r$ in such a way that the distance between $v$ and $r$ in $H$ is as small as possible.
    Consider a shortest path between $v$ and $r$ in $H$. By the choice of $v$ and $r$,
    there are no red vertices other than $r$ and no blue vertices of $V(H) \setminus V(H')$ on the path.
    Since $v \notin V(H')$ and $r \in V(H)$, there is a vertex $v'$ on the path such that $v'$ has a neighbor in $V(H')$ but not contained in $H'$, which contradicts the maximality of $V(H')$. 
    
    This greedy algorithm runs in linear time and hence the lemma follows.
\end{proof}

Overall, we have the following running time.

\begin{theorem}
    BCS can be solved in $O^*(2^{n/2})$ time.
\end{theorem}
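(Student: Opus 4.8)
The plan is to combine the two components already developed in this section: the modified Dreyfus--Wagner recurrence for $\opt'$ and the greedy completion lemma. First I would observe that, by assumption, $|R| \le |B|$, so $|R| \le n/2$. Computing $d'(v,w)$ for all pairs $v, w$ is a polynomial-time preprocessing step (a BFS from each vertex restricted to avoid red vertices except at the endpoints). Then, invoking the previous lemma together with Theorem~\ref{thm:fsc}, I would argue that all values $\opt'(S, v)$ for $S \subseteq R$ and $v \in V$ can be computed in $O^*(2^{|R|}) \subseteq O^*(2^{n/2})$ time: the recurrence has the shape of a min-sum subset convolution over the lattice $2^R$, with an outer minimization over $w \in V$ and a ``shortest-path extension'' by $d'(v,w)$ applied afterward, exactly as in the standard Dreyfus--Wagner speedup. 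The entries are bounded by $n$, so the $(|U| + M)^{O(1)}$ factor in Theorem~\ref{thm:fsc} is polynomial.

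Next I would describe the enumeration phase. For each $S \subseteq R$ and each $v \in V \setminus (R \setminus S)$ with $\opt'(S, v) < \infty$, reconstruct by traceback a tree $F$ with $|E(F)| = \opt'(S,v)$ and $V(F) \cap R = (S \cup \{v\}) \cap R$; since $F$ is a tree, $|V(F)| = \opt'(S,v) + 1$. If $F$ already contains strictly more blue than red vertices, discard it. Otherwise, apply the greedy completion lemma to extend $F$ to a balanced connected subgraph $H'$ with $V(H') \cap R = V(F) \cap R$ whenever one exists; the lemma guarantees that a maximum-cardinality such $H'$ is found (it finds a maximal greedy extension, which the lemma's contradiction argument shows has the same vertex count as any balanced $H$ with the same red set). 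Taking the maximum $|V(H')|$ over all pairs $(S, v)$ yields the optimum of BCS. The correctness argument here is: an optimal BCS solution $H^*$ is connected, so it contains a spanning tree; contracting/pruning shows its red set $R^* = V(H^*) \cap R$ admits a Steiner tree $F$ avoiding $R \setminus R^*$, and this $F$ appears as $\opt'(R^*, v)$ for an appropriate $v$ — so the algorithm considers a pair $(S,v)$ with the right red set and the completion lemma certifies that a balanced subgraph with $|V(H^*)|$ vertices is reachable.

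For the running time bookkeeping: the number of pairs $(S, v)$ is at most $2^{|R|} \cdot n \le 2^{n/2} n$; each traceback and each greedy completion runs in polynomial time; and the $\opt'$ table itself is computed in $O^*(2^{n/2})$. Hence the total is $O^*(2^{n/2})$ as claimed. I would close by noting the clean consequence $O^*(2^{n/2}) = O(1.415^n)$, which improves on the $O(1.709^n)$ bound obtained from the graph motif algorithm.

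The main obstacle — really the only nontrivial point — is arguing that restricting attention to Steiner trees that \emph{exclude} irrelevant red vertices does not lose any optimal BCS solution, i.e. that every optimal $H^*$ has its red set realized as the red set of some $\opt'$-tree. This is exactly what makes the $2^{|R|}$ (rather than $2^{|B|}$ or $2^n$) bound legitimate, and it hinges on two facts already in place: a connected $H^*$ contains a spanning tree whose leaves can be pruned down to a tree with leaves in $R^* \cup \{v\}$ while staying within $V(H^*)$ (so no new red vertices are introduced), and the completion lemma then rebuilds enough blue vertices. Everything else — the subset-convolution invocation, the traceback, the polynomial per-pair work — is routine given the earlier lemmas.
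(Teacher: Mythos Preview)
Your proposal is correct and follows essentially the same approach as the paper: compute the $\opt'$ table via subset convolution in $O^*(2^{|R|})$ time, then for each $(S,v)$ recover the Steiner tree $F$ by traceback and apply the greedy completion lemma, observing that any balanced subgraph with red set $R_F$ has exactly $2|R_F|$ vertices so the greedy extension is automatically optimal for that red set. Your write-up is in fact more explicit than the paper's (which simply states the theorem after the two lemmas), particularly in spelling out why the optimal solution's red set $R^*$ is witnessed by some finite $\opt'(R^*,v)$ via a spanning tree of $H^*$.
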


\section*{Acknowledgements}
This work is partially supported by JSPS KAKENHI Grant Number JP17H01788 and JST CREST JPMJCR1401.

\end{document}